\tikzset{ ->, node distance=2cm, every state/.style={thick, fill=gray!10}, initial text=$ $, }
\pgfplotsset{compat=1.16}
\newcommand\fst{\mathit{fst}}
\newcommand\snd{\mathit{snd}}
\newcommand\Heads{\mathit{Heads}}
\newcommand\Tails{\mathit{Tails}}
\newcommand\Trust{\mathit{Trust}}
\newcommand\UTrust{\mathit{UTrust}}
\newcommand\distribution{\mathit{distribution}}
\newcommand\Output{\mathit{output}}
\newcommand{\indep}{\perp \!\!\! \perp}
\newcommand{\listerm}{\mathcal{L}}
\newenvironment{scprooftree}[1]{\gdef\scalefactor{#1}\begin{center}\proofSkipAmount \leavevmode}%
  {\scalebox{\scalefactor}{\DisplayProof}\proofSkipAmount \end{center} }
\theoremstyle{plain}
\newtheorem{theorem}{Theorem}[section]
\newtheorem{corollary}[theorem]{Corollary}
\theoremstyle{definition}
\newtheorem{definition}[theorem]{Definition}
\newtheorem{example}[theorem]{Example}
\theoremstyle{remark}
\date{}
\begin{document}


\title{Checking trustworthiness of probabilistic computations in a typed natural deduction system}

\author{
F.~A. D'Asaro\textsuperscript{a,b,*}, F.~A. Genco\textsuperscript{a,c} and G. Primiero\textsuperscript{a}\\
\textsuperscript{a}LUCI Lab, Department of Philosophy, University of Milan, Italy;\\\textsuperscript{b}Department of Human Sciences, University of Verona, Verona, Italy
\\\textsuperscript{c}Center for Logic, Language and Cognition, University of Turin, Turin, Italy
\\\footnote{Corresponding author, \texttt{fabioaurelio.dasaro@univr.it}}
}

\maketitle

\begin{abstract}
In this paper we present the probabilistic typed natural deduction calculus TPTND, designed to reason about and derive trustworthiness properties of probabilistic computational processes, like those underlying current AI applications. Derivability in TPTND is interpreted as the process of extracting $n$ samples of possibly complex outputs with a certain frequency from a given categorical distribution. We formalize trust for such outputs as a form of hypothesis testing on the distance between such frequency and the intended probability. The main advantage of the calculus is to render such notion of trustworthiness checkable. We present a computational semantics for the terms over which we reason and then the semantics of TPTND, where logical operators as well as a Trust operator are defined through introduction and elimination rules. We illustrate structural and metatheoretical properties, with particular focus on the ability to establish under which term evaluations and logical rules applications the notion of trustworthiness can be preserved.
\end{abstract}


\section{Introduction}

The extensive use of AI techniques in decision-making systems is significantly increasing the need for verification of the safety and reliability properties of probabilistic computations. The possibility of formal verification of such properties would grant the ability to consider such systems trustworthy. Nowadays, several approaches to verification of AI systems are emerging, see \citep{3448248,urban2021review} for overviews. Notably, approaches are focusing on model-checking techniques for safety, liveness and fairness properties, see, e.g., \citep{3133904,97830300109042}, program analysis and synthesis, see, e.g., \citep{dreossi2019verifai} or proof-checkers see, e.g., \citep{1210044,97833199614533} for their increasing use  also at the industrial level. In the technical literature, though, little is available on the formalization of a notion of trustworthiness itself in this context. Current logical approaches to computational trust in general are overviewed in Section \ref{sec:related}.

In this paper, we introduce a derivation system dubbed \textit{Trustworthy Probabilistic Typed Natural Deduction} (TPTND for short). The aim of this system is to formalize the task of inferential reasoning \textit{about} probabilistic computational processes, in particular about their trustworthiness. We consider samples of such processes with corresponding frequencies and reason about their distance from corresponding theoretical probabilities. We start by defining an operational semantics to consider a probability space in which such computations can be thought to be evaluated. This operational semantics defines events whose results have a certain probability of occurring, and generalizes to samples of such events with frequencies of observed outputs. We then transform such terms under operations for update, conjunction, disjunction and dependency of outputs. A full probabilistic $\lambda$-calculus for formalising experiments consisting of several executions of a program and for evaluating trust is introduced in \citep{230200958}. This $\lambda$-calculus does not feature, however, any other primitive operation on probabilistic events. 

In the language of TPTND, reasoning about such probabilistic computations is given in the form of logical rules for deriving theoretical probabilities, expected probabilities and frequencies. 
Accordingly, judgements of our language are sequents of the form $\Gamma \vdash \phi$ where $\Gamma$ is a {\it context}, that is,  a set of \textit{assumptions} on random variables, and $\phi$ is a typed formula in one of the following forms:
\begin{itemize}
\item $x:\alpha_{a}$, declaring that the random variable $x$ has been assigned value $\alpha$ with theoretical probability $a$;
\item $t_{n}:\alpha_{\tilde{a}}$, declaring that $a$ is the expected probability of obtaining the value $\alpha$ in a sample of $n$ trials on the process $t$;
\item $t_{n}:\alpha_{f}$, declaring that $f$ is the frequency with which the value $\alpha$ has been obtained during a series of $n$ trials on the process $t$.
\end{itemize}

Both an element of a context and a derived expression may express a statement about a random variable. The derivability relation $\vdash$ expresses dependency between probabilities. For example, the judgement\[ x : \alpha \vdash y : \beta_{0.1} \]means that the random variable $y$ has probability $0.1$ of producing output $\beta$, provided $x$ has output $\alpha$. Under a context interpreted as representing a probability distribution, a derived formula expresses a sample extracted from such distribution. The type of the derived formula is then decorated with an expected probability and the term with a sample size. For example, the judgement\[\Gamma \vdash toss_{100} : \Heads_{\widetilde{0.35}} \]may be interpreted as the statement saying that, under distribution $\Gamma$, after tossing a coin $100$ times, one expects output $\Heads$ $35\%$ of the times. $\Gamma$ specifies the value assigned to the output of each random variable of interest in the context of an experiment consisting of coin tossing trials\footnote{We talk of an {\it experiment} simply in order to refer to a series of trials on a particular process. A {\it trial} on a process, in turn, is just an execution of the process.} (the possible outputs are thus $Heads$ and $Tails$). Finally, we can also express any concrete experiment consisting of a series of trials by displaying the frequency of a particular output during the experiment. For instance,\[\Gamma \vdash toss_{100} : {\Heads_{30/100}} \]which says that over $100$ tosses of our coin, the output $Heads$ was obtained $30$ times. 


The calculus defines also rules for Bayesian update and trust evaluation, where judgements occurring in the rule may include both probabilities and frequencies. In our example above, the theoretical probability of the coin assumed to be fair may be updated after a trial of $n$ experiments, empirically run by throwing the coin. And under the assumption that the coin is fair (or biased), one may want to judge the trustworthiness of the process at hand, e.g., we may ask whether the $toss$ process follows a fair distribution over the outcomes $Heads$ and $Tails$. For example, assuming that $\Gamma$ expresses the distribution corresponding to a fair coin $c$ as the possible output declarations of one random variable $\Gamma=\{x_{c}:Heads_{0.5}, x_{c}:Tails_{0.5}\}$, the process $toss$ should be considered trustworthy if, by increasing the number $n$ of trials on it, the probability of output $Heads$ gets arbitrarily closer to $0.5$ in the limit, and in particular when the displayed frequency of $Heads$ falls within a given confidence interval determined by an acceptable error rate. Note that the judgement above can also be formulated when the distribution denoted by $\Gamma$ is unknown, i.e., when one does not know whether the tossed coin is fair or biased, and the frequency value of the output $Heads$ is just empirically observed. Generalizing from this toy example, any probabilistic computation can be checked for trustworthiness in this sense. Therefore, our system can be employed to verify how much we should trust, for instance, a classifier to produce the correct output.

Meta-theoretical properties are formulated considering the computational transformations of empirical tests and their frequencies according to logical operations. In particular, we show which syntactic transformations of terms as they are denoted by inferences of the system
\begin{enumerate}
\item do not permit to infer unintended outputs for the processes; and
\item allow to infer whether the observed computation is trustworthy, and under which inferential steps such trustworthiness can be preserved.
\end{enumerate}

We consider this system a useful step towards the formal design and possible automation of reasoning about probabilistic computations like those in use in Machine Learning techniques, 
as well as for the development of proof-checking protocols on such computational systems. In the following, we show the internal workings of TPTND by the use of simple examples, like dice rolling and coin tossing. However, we also hint at concrete applications using a more realistic example inspired by gender-bias verification.

The rest of this paper is structured as follows. In Section \ref{sec:related} we overview the literature on proof-theoretic formal verification of probabilistic systems and computational trust.
In Section \ref{sec:compsec} we introduce a computational semantics defining how atomic processes of interest behave from an operational perspective 
and their closure under logical operations. In Section \ref{sec:system} we introduce our system TPTND, first through its syntax (Section \ref{subsec:syntax}), then through its inferential rules to reason: first about theoretical probabilities (Section \ref{subsec:reason_random}); then about processes or computations executed under given probability distributions (Section \ref{subsec:reason_processes}). We further show how to model Bayesian Updating of random variables in TPTND (Section \ref{subsec:bayesian}) and finally we introduce the inferential engine to reason about trust on the computational processes of interest (Section \ref{subsec:trust}) and to derive structural properties on the relevant probability distributions (Section \ref{subsec:structural}). In Section \ref{sec:meta} we prove the main meta-theoretical results about TPTND, aiming at showing that given a semantics of well-behaving terms in a probabilistic space, the derivability relation of our system guarantees that outputs obtained with sufficient level of trustworthiness are preserved, and accordingly side-effects are avoided. We conclude in Section \ref{sec:conclusions} indicating further steps of this research.

\section{Related Work}\label{sec:related}

The present work offers an example of a formal verification method for trustworthy probabilistic computations as implemented for example by machine-learning systems. This area of research, focusing on the modeling of probabilistic systems and reasoning over satisfiability of properties of relevance is yet in its infancy, see \citep{3448248}. Moreover, it mainly refers to the development of computational, temporal and hybrid logics as formal methods to prove trustworthy properties of interest, see, e.g., \citep{489079,647810738106,DBLP:conf/lori/TerminePD21}; or with linear temporal logic properties defined over Markov decision processes, e.g., with reinforcement learning methods \citep{3302509.3311053} or with imprecise probabilities \citep{DBLP:conf/eumas/TermineAPF21}. 

In this context, the logic introduced in this paper offers a novel combination of methods and approaches. First of all, it refers to the design of inferential systems of the family of natural deduction systems, sequent calculi and typed $\lambda$-calculi, extended with probabilities. These systems and the associated formal proof-verification methods have so far been only little explored in their ability to check trustworthiness of probabilistic computational systems. Secondly, we consider trustworthiness as a property which can itself be expressed and formalized explicitly in the calculus, and therefore openly checked for satisfiability or derivability. Hence, our language TPTND combines these two areas by presenting a probabilistic derivation system with types in which trust is explicitly formulated as a functional operator on typed terms and in which meta-theoretical results are interpreted accordingly. In particular, we present safety intended as the ability to prove whether the system under consideration produces nothing beyond what it is intended and designed to do, within certain margins of certainty due to its probabilistic nature. Accordingly, two main areas of research are surveyed in the following.

The extension of inferential systems based on  $\lambda$-calculi or type systems  with probabilities has been recently explored in the literature. The interest is mainly motivated by the need to develop semantics for probabilistic processes and programs, and to perform inferences on them. There are examples of stochastic untyped $\lambda$-calculi which focus on denotational or operational semantics with random variables, see respectively \citep{DBLP:conf/lics/BacciFKMPS18, DBLP:conf/icfp/BorgstromLGS16} and the uniform formulation for both offered in \citep{DBLP:conf/lics/AmorimKMPR21}. In this first sense, the meaning of a probabilistic program consists in enriching the language of programs with the ability to model  sampling from distributions, thereby rendering program evaluation a probabilistic process. In this way, the work in \citep{DBLP:conf/icfp/BorgstromLGS16}  introduces a measure space on terms and defines step-indexed approximations. A sampling-based semantics of a term is then defined as a function from a trace of random samples to a value. Differently, our work while also using expressions denoting values from samples, associates the probabilistic approximation with the value in terms of types, while indexing terms with the sample size. In this sense, obviously, the expressive power of our typed language is closer to \citep{DBLP:journals/pacmpl/DahlqvistK20}, which is designed specifically to model Bayesian learning; nonetheless, their semantic explanation of expressions in the language is significantly different, as expressions in our system denote samplings which generate outputs with a certain frequency, assuming a certain theoretical probability distribution. 
Hence, the aim of TPTND and its inferential structure are in fact closer to works introducing some form of probability in calculi with types or natural deduction systems, in particular see \citep{pierro2020atypetheory,bor16, bor17,borivcic2019sequent,ghilezan2018probabilistic}.
\citep{pierro2020atypetheory} introduces a $\lambda$-calculus augmented with special ``probabilistic choice'' constructs, i.e., terms of the form $M = \{ p_1 M_1, \dots, p_n M_n \}$, meaning that term $M$ has probability $p_1, \dots, p_n$ of reducing to one of the terms $M_1, \dots, M_n$ respectively. Unlike TPTND, \citep{pierro2020atypetheory} deals with judgements that do not have a context and uses a subtyping relation for the term reduction. \citep{bor17} introduces ``probabilistic sequents'' of the form $\Gamma \vdash^n \Delta$, which are interpreted as stating that the probability of $\Gamma \vdash \Delta$ is greater than, or equal to, a function of the natural number $n$, that is, $1-n\varepsilon$ for a fixed threshold $\varepsilon$. A similar formalism is introduced in \citep{borivcic2019sequent} which, closer to the natural deduction in \citep{bor16}, formalises probabilistic reasoning through sequents of the form $\Gamma \vdash_{a}^{b} \Delta$, which are interpreted as empirical statements of the form ``the probability of $\Gamma \vdash \Delta$ lies in the interval $[a,b]$''.
Differently from this work, TPTND does not express explicitly probabilistic intervals, but only sharp probability values, while at the same time expressing such probability on output types, rather than on the derivability relation. Finally, \citep{ghilezan2018probabilistic} introduces the logic P$\Lambda_{\rightarrow}$ where it is possible to mix ``basic'' and ``probabilistic'' formulas, the former being similar to standard Boolean formulas, and the latter being formed starting from the concept of a ``probabilistic operator'' $P_{\geq s} M:\sigma$ stating that the probability of $M:\sigma$ is equal to or greater than $s$. The semantics for P$\Lambda_{\rightarrow}$ is Kripke-like and its axiomatisation is infinitary. 

Most significantly for type-theoretical models of probabilistic reasoning, \citep{DBLP:conf/types/Adams015} introduces a quantitative logic with fuzzy predicates and conditioning of states. The computation rules of the system can be used for calculating conditional probabilities in two well-known examples of Bayesian reasoning in (graphical) models. In the same family, \citep{DBLP:journals/corr/Warrell16} offers a Probabilistic Dependent Type System (PDTS) via a functional language based on a subsystem of intuitionistic type theory including dependent sums and products, expanded to include stochastic functions. It provides a sampling-based semantics for the language based on non-deterministic $\beta$-reduction. A probabilistic logic from PDTS introduced as a direct result of the Curry-Howard isomorphism is derived, and shown to provide a universal representation for finite discrete distributions.

TPTND offers an integrated way to deal with probability intervals (similarly to \citep{bor16}) and probabilistic choices (similarly to \citep{pierro2020atypetheory}). However, unlike these languages, TPTND was mainly designed to reason about and derive trustworthiness properties of computational processes. In fact, differently from all these other systems, TPTND has an explicit syntax both to deal with the number of experiments, and to prove trust in a process whenever the empirically verified probability is close enough to the theoretical one. 

The formulation of the syntax of TPTND for the evaluation of trustworthiness of probabilistic computational processes recalls an entirely different literature, namely that of logical systems in which trust is explicitly formulated as an operator in the language. There are several logical frameworks providing an interpretation of one of the many aspects of computational trust, from manipulation to verification to assessment \citep{DBLP:conf/itrust/Demolombe04,DBLP:conf/atal/Singh11a,DBLP:conf/ant/DrawelBS17,DBLP:journals/tomacs/Aldini18, DBLP:conf/esorics/AldiniT19,DBLP:journals/fgcs/DrawelQBS20}. Modal logics, and in particular knowledge and belief logics,  have been extensively investigated for trust, see, e.g., \citep{DBLP:journals/ai/Liau03,DBLP:journals/igpl/HerzigLHV10,DBLP:conf/prima/LiuL17}. As mentioned above, much less investigated are proof systems in which trust is formulated explicitly as an operator or a functional expression. An exception is the family of logics \texttt{(un)SecureND} whose most complete formulation, including also a relational semantics, is offered in \citep{DBLP:journals/jancl/Primiero20}. The proof-theoretic fragment of the logic has been applied to a number of areas, ranging from  software management \citep{DBLP:conf/pst/BoenderPR15,DBLP:conf/ifiptm/PrimieroB17, DBLP:journals/wias/PrimieroB18} to  modelling a trust and reputation protocol for VANET \citep{DBLP:conf/eurosp/PrimieroRCN17}; and the evaluation of the trustworthiness of online sources \citep{DBLP:conf/ifiptm/CeolinP19,DBLP:conf/goodit/CeolinDP21}. This proof-theoretical language, in its deterministic version also includes a Coq implementation (\url{https://github.com/gprimiero/SecureNDC}) for proof-checking. The system TPTND presented in this paper can be seen as the probabilistic extension of the fragment of \texttt{(un)SecureND} including only the closure of trust under negation which corresponds to distrust in the latter system. The present version of TPTND significantly extends and improves on the version previously formulated in \citep{DBLP:conf/atal/DAsaroP21}.

\section{TPTND}\label{sec:system}
{ We now introduce the system TPTND, which comprises a logical language that enables us to talk about probabilistic computational processes and related trustworthiness properties, and a proof system providing a set of rules to formally reason about these processes and establish whether they enjoy the trustworthiness properties expressible in the language. Before introducing the syntax of the system, let us present a procedural semantics that determines the kind of computational objects and phenomena that the language of TPTND is meant to describe.
}

\subsection{Computational Semantics}\label{sec:compsec}

{ The computational semantics of TPTND is a procedural semantics, that is, a semantics that determines the objects of discourse by way of a definition of their behaviour in procedural terms. As already mentioned, the semantics formally defines the objects and phenomena that the language of TPTND is supposed to describe.
}

{ In order to introduce the semantics, let us begin with a simple example  that clarifies which kind of computational and probabilistic phenomena we are interested in and how we formalise them semantically.

\begin{example}
Suppose that we are interested in the probabilistic behaviour of two coins $u$ and $v$. Suppose, moreover, that $u$ is a fair coin while $v$ returns heads (value that we denote by the type $\eta$) with probability $\frac{2}{3}$ and tails (value that we denote by the type $\tau$)  with probability $\frac{1}{3}$.

In our semantics, the coin $u$ will be defined by the two following rules:
\[\infer{\listerm \mapsto_{\frac{1}{2}} \listerm , u:\eta}{}\qquad \infer{\listerm \mapsto_{\frac{1}{2}} \listerm ,  u:\tau}{}\]
indicating that, in any environment $\listerm$, we can always observe with probability $\frac{1}{2}$ that the coin $u$ returns a value $\eta$ (for heads) and with probability $\frac{1}{2}$ that $u$ returns a value $\tau$ (for tails). 
Technically, what we called an environment is simply a list of typed terms, either like $u:\eta$ and $u:\tau$---thus indicating that we observed that a certain term $u$ returned a certain value $\eta$ or $\tau$---or of the form $t_n:\alpha_a$---thus indicating that the frequency with which an output of type $\alpha$ has been produced by a process $t$ during a series of $n$ executions is $a$. Similarly, the coin $v$ is defined by thefollowing two rules:
\[\infer{\listerm \mapsto_{\frac{2}{3}} \listerm , v:\eta}{}\qquad \infer{\listerm \mapsto_{\frac{1}{3}} \listerm ,  v:\tau}{}\]
These four rules define all we know---and care---about the two coins. By using them, we can represent experiments on the two coins. For instance, without any additional assumption, we can produce the following sequence of events:  
\[\mapsto_{\frac{1}{2}} \; u:\eta \;\mapsto_{\frac{1}{2}} \;u:\eta , u:\eta \;\mapsto_{\frac{1}{2}} \; u:\eta , u:\eta , u:\tau \; \mapsto_{\frac{1}{3}} \; u:\eta , u:\eta , u:\tau , v:\tau\]representing the case in which we throw three times $u$ and one time $v$ and we obtain twice heads from $u$, one time tails from $u$ and one time tails from $v$. Notice that the reduction arrows $\mapsto$ are labelled by the probability of the observation that we add to the list of terms preceding the arrow in order to obtain the list following the arrow. 

Now that we have represented the process of collecting several observations on the two coins, we can gather them in a statement about what happened to one coin:\[ \; u:\eta , u:\eta , u:\tau , v:\tau\; \mapsto_1 \; u_3:\eta_{\frac{2}{3}} , v:\tau\]by the sampling$^{\mapsto}$ rule that we will define below. After the sampling, our environment contains one term indicating that $u$ returned heads twice over three throws and one indicating that $v$ returned tails once. The reduction obtained by using this rule has probability $1$ since it does not represent a probabilistic event but simply the action of putting together several observations.

Suppose now that we keep throwing the dice $v$ and that we obtain heads two times. We can collect our observations on the value tails obtained from $v$ along with our observations on the value heads obtained from $u$---in both cases, we represent this by an application of the sampling$^{\mapsto}$ rule. What we obtain is the following list of terms: $u_3:\eta_{\frac{2}{3}} , v_3:\tau_{\frac{1}{3}}$. At this point, we can talk about the behaviour of both  coins together by further putting together our knowledge into a term $\langle u_3:\eta_{\frac{2}{3}} , v_3:\tau_{\frac{1}{3}} \rangle$ describing the probabilistic behaviour of the pair of coins with respect to the result of throwing them that consists in obtaining once heads and once tails (represented by the conjunction type $\eta \times \tau$):
\[u_3:\eta_{\frac{2}{3}} , v_3:\tau_{\frac{1}{3}}\;\mapsto_1\; \langle u_3:\eta_{\frac{2}{3}} , v_3:\tau_{\frac{1}{3}} \rangle : (\eta \times \tau)_{\frac{2}{3}\cdot \frac{1}{3}} \]As we will see later, other logical operations can be employed to study the behaviour of processes---for instance, considering how frequently a process yields one of two values or considering how frequently a process yields a value if another, possibly related process yields a value with a particular frequency.
\end{example}

}
 
We now formally define the set of rules of the operational semantics. Technically, the presented calculus is meant as a proof-system for the deduction of judgements expressing the probability of obtaining certain outputs (types) from possibly opaque, probabilistic programs (terms). Therefore, all there is to specify about the nature and behaviour of the elements of the domain of discourse is their computational behaviour.  
These elements are, in particular, distinct computational processes yielding outputs of mutually exclusive types. Hence, the event of a process $t$ yielding an output $\alpha$---formally, $t:\alpha$---is always supposed to be stochastically independent from the event of another process $u$ yielding another output $\beta$ {\it and} from
the event of another process $u$ yielding the same output $\alpha$. Hence, providing a semantics in terms of generic events and their probabilities would not be very interesting and is certainly not required to fully understand the workings of the proof-system at hand. The calculus is, indeed, {\it not} supposed to be a means to compute the probability of generic events or processes. 
We focus then on the reduction rules determining the computational behaviour of our processes, which are presented in Figures \ref{fig:eval} and \ref{fig:eval2}, and discuss their meaning.

While the syntax of terms will be explicitly specified in Definition \ref{def:syntax} along with the rest of the syntax of the calculus TPTND, for the present section, as per usual practice, we just define a well-formed typed term as any element of a list $\listerm '$ occurring inside an expression of the form $\listerm\mapsto\listerm '$ that can be obtained by applying one or more of the rules in Figures \ref{fig:eval} and \ref{fig:eval2}. In other words, an evaluation rule acts on a list of typed terms (which, in the case of $\textrm{event}^{\mapsto t}$, might be empty) and produces a non-empty list of typed terms. Typed terms, therefore, can be of two forms:
\begin{itemize}
\item $t:\alpha$, denoting that the process $t$ produced an output of type $\alpha$; and
\item $t_n:\alpha_f$, denoting that $f$ is the frequency with which we obtained outputs of type $\alpha$ during an experiment consisting of $n$ executions of $t$---in other words, this denotes that the process $t$ produced $f\cdot n$ times an output of type $\alpha$ over a total of $n$ executions.
\end{itemize}
For each {\it atomic} process $t$  that we wish to study, we should associate a list $\alpha ^1, \ldots ,\alpha ^m$ of atomic types to it and we should define an $\textrm{event}^{\mapsto t}$  rule. Such a rule will define the probabilistic computational behaviour of $t$. Evaluating a list of typed terms will correspond, then, to collecting the results of experiments  (see Figure \ref{fig:eval}) and of logical operations (see Figure \ref{fig:eval2}) executed on (possibly several copies of) the probabilistic computational processes that we have introduced by the $\textrm{event}^{\mapsto t}$ rules.

We discuss now the specific meaning of each evaluation rule. We begin with the rules for evaluating atomic terms and then we consider the rules implementing logical operations on typed terms indicating the frequency of outputs of a certain type during experiments on probabilistic terms.

\begin{figure}

\[\infer[\textrm{event}^{\mapsto t}]{\listerm \quad \mapsto_{a_i}\quad \listerm, t:\alpha^i  }{}\]where each $a_i$ is a number in the unit interval $[0,1]$, $t$ is an atomic term, $\alpha^1, \dots , \alpha ^m$ are the atomic types  associated to $t$, $1\leq i\leq m $, $(\sum^{m}_{i=1} a_i)=1$
\[\infer[\textrm{sampling}^{\mapsto}]{\listerm, t:\alpha^1 , \dots , t:\alpha^n \quad \mapsto_{1}\quad \listerm, t_{n}:\alpha_f}{}\]
where $f=\frac{\mid\{i\,\mid\, \alpha _i=\alpha\}\mid }{n}$ 
\[\infer[\textrm{update}^{\mapsto}] {\listerm, t_n:\alpha_f ,t_m :\alpha_g \quad \mapsto_{1}\quad \listerm, t_{n+m}:\alpha_{f\cdot(n/(n+m))+g\cdot (m/(n+m))}}{}  \]
\caption{Evaluation Rules for Atomic Terms}
\label{fig:eval}
\end{figure}As already mentioned, the $\textrm{event}^{\mapsto t}$ rule defines the behaviour of the atomic term $t$. Technically, the rule enables us to add to our possibly empty list $\listerm$ a typed term indicating the outcome of one execution of $t$. Since $t$ denotes a probabilistic process, the rule will add to the list a typed term of the form $t:\alpha^i$ which specifies that $t$ produced an output of type $\alpha ^i\in \{\alpha ^1, \dots , \alpha ^m\}$. The choice of a specific type $\alpha ^i$ is non-deterministic, and the probability of selecting the type $\alpha ^i$ is $a_i$, for each $i\in \{1, \dots , m \}$. The conditions on this evaluation rule guarantee that the sum of the probabilities of all possible types of outputs of a term $t$ is always $1$.

 The sampling rule enables us to collect the results of $n$ executions of an atomic term $t$ into one typed term $t_n:\alpha _f$ indicating the frequency of the outputs of a type $\alpha$. In particular, the term $t_n:\alpha _f$ indicates the number of times an output of type $\alpha $ has been produced over the total number of outputs produced during the $n$ executions of $t$, i.e., $f\cdot n$ will be the number of times $t$ produced an output of type $\alpha$ during the experiment consisting of $n$ executions of $t$. 

The update rule combines, by a Bayesian update function, two typed terms $t_n:\alpha_f $ and $t_m :\alpha_g$ indicating the frequency with which an output of type $\alpha$ has been produced by $t$ during, respectively, an experiment consisting of $n$ executions of $t$ and an experiment consisting of $m$ executions of $t$, and produces a typed term $t_{n+m}:\alpha_{f\cdot(n/(n+m))+g\cdot (m/(n+m))}$ indicating the frequency with which an output of type $\alpha$ has been produced by $t$ during an experiment consisting of $n+m$ executions of $t$.

 We explain now the meaning of the logical operation rules, presented in Figure \ref{fig:eval2}.

\noindent
\begin{figure}

\[\infer[\mathrm{I}^{\mapsto}+]{\listerm, t_n:\alpha_f ,t_n :\beta_g \quad \mapsto_{1}\quad \listerm,  t_n :(\alpha +\beta )_{f+g}}{}\]

\[\infer[\mathrm{E}^{\mapsto}+_L]{\listerm, t_n :(\alpha +\beta ) _{f}  \quad \mapsto_{1}\quad  \listerm, t_n :\alpha_{f-g}  }{t_n : \beta_g }\]

\[\infer[\mathrm{E}^{\mapsto}+_R]{\listerm, t_n :(\alpha +\beta ) _{f}  \quad \mapsto_{1}\quad  \listerm, t_n :\beta_{f-g}  }{t_n : \alpha_g }\]

\[\infer[\mathrm{I}^{\mapsto}\times]{\listerm, t_n:\alpha_f ,u_n :\beta_g\quad \mapsto_{1}\quad \listerm, \langle t,u \rangle_{n} :(\alpha \times\beta )_{f\cdot g }}{}  \]


\[\infer[\mathrm{E}^{\mapsto}\times_L]{\listerm, t_n :(\alpha \times\beta ) _{f}  \quad \mapsto_{1}\quad  \listerm, \textit{fst}(t)_{n} :\alpha _{f/g}  }{\textit{snd}(t)_n : \beta_g }  \]

\[\infer[\mathrm{E}^{\mapsto}\times_R]{\listerm, t_n :(\alpha \times\beta ) _{f}  \quad \mapsto_{1}\quad  \mathcal{L}, \textit{snd}(t)_{n} :\beta_{f/g}  }{\textit{fst}(t)_n : \alpha _g }\]



\[\infer[\mathrm{I}^{\mapsto}\rightarrow]{\mathcal{L},t_n :\beta_f\quad \mapsto_{1}\quad \mathcal{L},  [x_u]t_n:(\alpha\rightarrow\beta)_{[a]f}}{x_u:\alpha_a }  \]

\[\infer[\mathrm{E}^{\mapsto}\rightarrow]{\mathcal{L}, [x_u]t_n:(\alpha\rightarrow\beta)_{[a]f} , u_n : \alpha _g \quad \mapsto_{1}\quad  \mathcal{L}, t_n.[u_n:\alpha ]:\beta_{g\cdot f}  }{x_u:\alpha _a}\]

where $\alpha$ and $\beta$ are metavariables for generic (possibly non-atomic) types, we assume, for $\mathrm{E}^{\mapsto}+_L, \mathrm{E}^{\mapsto}+_R, \mathrm{E}^{\mapsto}\times_L, \mathrm{E}^{\mapsto}\times_R$, that $\alpha \neq \beta$, and $x_u$ is a designated variable that we associate to the term $u$

	\caption{Logical Term Evaluation Rules}\label{fig:eval2}
\end{figure}

The rule I$^{\mapsto}$+ enables us to combine two typed terms specifying the frequency of the outputs of type $\alpha $ during an experiment consisting of $n$ executions of the process $t$ and of the outputs of type  $\beta$ during an experiment consisting of $n$ executions of $t$, in order to indicate the frequency of outputs of type either $\alpha $ or $\beta$ (namely of outputs of type $\alpha +\beta$) during an experiment consisting of $n$ executions of $t$.

The rules E$^{\mapsto}$+ can be applied to a typed term indicating the frequency of outputs of type $\alpha+\beta$, with respect to  an experiment consisting of $n$ executions of $t$, in order to extract the information about the frequencies of the outputs of type $\alpha$ and of the outputs of type $\beta$. If we wish to extract the information about the frequencies of the outputs of type $\alpha$ we use E$^{\mapsto}$+$_L$ and we need the information on the outputs of type $\beta$ indicated in the premise of the rule. In order to extract the information about the frequencies of the outputs of type $\beta$ we use E$^{\mapsto}$+$_R$ and we need the information on the outputs of type $\alpha$ indicated in the premise of the rule.

The rule I$^{\mapsto}\times$ enables us to combine two typed terms specifying the frequency of the outputs of type $\alpha $ during an experiment consisting of $n$ executions of the process $t$ and of the outputs of type  $\beta$ during an experiment consisting of $n$ executions of the process $u$, in order to indicate the frequency of outputs of type $\alpha\times\beta$ (namely of pairs of outputs in which the first element is of type $\alpha$ and the second of type $\beta$) during an experiment consisting of $n$ executions of the pair of processes $\langle t,u \rangle$. 


The rules  E$^{\mapsto}\times$ can be applied to a typed term indicating the frequency of outputs of type $\alpha\times\beta$ (and thus of pairs of outputs in which the first element is of type $\alpha$ and the second of type $\beta$), with respect to an experiment consisting of $n$ executions of $t$, in order to extract the information about the frequencies of the outputs of type $\alpha$ and of the outputs of type $\beta$ occurring as first and second element, respectively, of the considered output pairs. If we wish to extract the information about the frequencies of the first elements of the pairs (those of type $\alpha$) we use E$^{\mapsto}\times_L$ and we need the information on the second elements of the considered pairs of outputs, expressed by the premise of the rule. In order to extract the information about the frequencies of the second elements of the pairs (those of type $\beta$) we use E$^{\mapsto}\times_R$ and we need the information on the first elements of the considered pairs of outputs, expressed by the premise of the rule.

The rule I$^{\mapsto}\rightarrow$ enables us to make explicit a dependency between the frequency of the outputs of type $\beta$ of a process $t$ and the probability of obtaining an output of type $\alpha$ from a process $u$. The rule yields a term $[x]t$ indicating the dependency of the process $t$ on the value variable $x$ indicating the theoretical probability of obtaining $\alpha$. The type of the term $[x]t$ is $\alpha\rightarrow \beta $ in order to  indicate the existence of a dependency also at the level of types.

The rule E$^{\mapsto}\rightarrow$ enables us to combine a typed term  $[x_u]t_n:(\alpha\rightarrow\beta)_{[a]f}$ with a typed term $u_n:\alpha _g$. Since the first term  $[x_u]t_n:(\alpha\rightarrow\beta)_{[a]f}$ indicates that the frequency $f$ of the outputs of type $\beta$ produced by a process $t$ depends on the probability $a$ of obtaining an output of type $\alpha$ from a process $u$ and since the second term $u_n:\alpha _g$ indicates that an experiment of $n$ executions of the process $u$ yielded outputs of type $\alpha$ with a frequency of $g$, the resulting term $t_n.[u_n:\alpha ]:\beta_{g\cdot f}$, obtained by applying the first term to the second, indicates that the frequency of outputs of type $\beta$ obtained during an experiment consisting of $n$ executions of $t$ is $f\cdot g$ since the frequency of this type of output depends on the, now known, frequency $g$ of the outputs of type $\alpha $ produced by $u$.

\subsection{Syntax}\label{subsec:syntax}
The syntax of TPTND is generated by the following grammar in Backus-Naur form:

\begin{definition}[Syntax]\label{def:syntax}
	\begin{displaymath}
	\begin{array}{l}
	\mathtt{X} := \mathtt{x} \mid \mathtt{x}_\mathtt{T} \mid \langle \mathtt{X}, \mathtt{X} \rangle \mid \fst{(\mathtt{X})} \mid \snd{(\mathtt{X})} \mid [\mathtt{X}]\mathtt{X} \mid \mathtt{X}.\mathtt{X} \mid \mathtt{X}.\mathtt{T} \\

	\mathtt{T} := \mathtt{t} \mid \langle \mathtt{T},\mathtt{T} \rangle \mid \fst(\mathtt{T}) \mid \snd(\mathtt{T}) \mid
	[\mathtt{X}]\mathtt{T}\mid
	\mathtt{T}.\mathtt{T}\mid
	\Trust(\mathtt{T})\mid \UTrust(\mathtt{T})\\

	\mathtt{O}:= \alpha \mid \mathtt{O}_r \mid \mathtt{O}^\bot_{r} \mid  (\mathtt{O} \times \mathtt{O})_r \mid (\mathtt{O} + \mathtt{O})_r \mid (\mathtt{O} \rightarrow \mathtt{O})_{[r']r}\\
	
	\mathtt{C}:= \{ \} \mid \mathtt{C}, \mathtt{X} : \mathtt{C} \mid \mathtt{C}, \mathtt{X} : \mathtt{C}_r\mid \mathtt{C}, \mathtt{X} : \mathtt{C}_{[r,r']}\\
	
	\mathtt{S}:= \mathtt{C}:: distribution \mid \mathtt{O}:: output \mid 
 r::\mathbb{R}

	\end{array}
	\end{displaymath}
\end{definition}
We denote \textit{random variables} by $\mathtt{X}$. The metavariable $\mathtt{x}$ denotes a particular random variable, e.g., $x, x', y, z$.
Indexed random variables $x_\mathtt{T}$ are tied to specific terms, meaning that they describe uncertainty about the output of the denoted process $t$. We may combine independent random variables into pairs of the form $\langle \mathtt{X}, \mathtt{X} \rangle$ which intuitively corresponds to their joint distribution. Pairs have associated projection functions $\fst(\mathtt{X})$ and $\snd(\mathtt{X})$ to extract the first and second variable. We use the constructions $[\mathtt{X}]\mathtt{X'}$ and $\mathtt{X}.\mathtt{X'}$ (with $\mathtt{X'}$ a metavariable for a possibly distinct variable) to express the dependency relation between random variables, i.e., the probability assigned to the random variable $\mathtt{X}$, given the probability assigned to $\mathtt{X'}$ and the corresponding evaluation. The construction $\mathtt{X}.\mathtt{T}$ expresses the update of the probability assigned to random variable $\mathtt{X}$ when using the information on the frequency of the output of a process $\mathtt{T}$.

\textit{Terms} $\mathtt{T}$ are executed computational processes or computational experiments. The metavariable $\mathtt{t}$ denotes any constant term, e.g., $t, t', u,w, \dots$. Terms with different names are regarded as \textit{distinct} processes. When \textit{one and the same} term for a process occurs in distinct branches of a derivation (possibly with different numerical subscripts), these must be intended as distinct runs or executions of the same process.
Sometimes we use superscripts $t^{1}, \dots, t^{n}$ to distinguish the first to the $n^\text{th}$ distinct executions of a process $t$. We may compose processes into pairs $\langle \mathtt{T},\mathtt{T} \rangle$ associated with projection functions $\fst(\mathtt{T})$ and $\snd(\mathtt{T})$. The construct $\mathtt{[X]T}$ expresses the expected probability assigned to the output of process $\mathtt{T}$ given the theoretical probability assigned to $\mathtt{X}$; while the construct $\mathtt{T}.\mathtt{T}$ expresses the update of the expected probability assigned to the output of a process $\mathtt{T}$ with the expected probability of the output of another process of the form $\mathtt{T}$, where the latter is taken to substitute the dependency of $\mathtt{T}$ from some random variable $\mathtt{X}$. We include among terms the function-like expressions for trust: $\Trust(\mathtt{T})$ expresses that the frequency of process $\mathtt{T}$ is considered trustworthy  with respect to the theoretical probability of its output, and the trust operator is defined by an appropriate pair of introduction and elimination rules below. The function $\UTrust(\mathtt{T})$ expresses negated trustworthiness for the frequency shown by the output of process $\mathtt{T}$ with respect to its intended theoretical probability.

Types $\mathtt{O}$ are output values of random variables and computational processes, with a probability $r\in [0,1]$ attached. Similarly as in the case of terms and random variables, $\alpha$ and $\beta$ are used as metavariables for generic types (thus possibly non atomic). Output $\alpha \times \beta$ expresses the independent occurrence of outputs $\alpha$ and $\beta$. Output $\alpha + \beta$ expresses the disjunction of outputs $\alpha$ and $\beta$. The construction $(\alpha \rightarrow \beta)_{[r']r}$ is interpreted as the probability $r$ of output $\beta$ for a given random variable or process, provided the probability of output $\alpha$ is $r'$. We use $\alpha^{\bot}$ to denote any output different than $\alpha$ (possibly including no output). 
We will use subscripts $a$, $b$, $c$, $\dots$ to denote instance values of probability $r$ when referring to the output of a random variable $\mathtt{X}$. 
Hence, in a formula of the form $x:\alpha_{a}$, the subscript $a$ denotes the probability associated with the event of random variable $x$ taking value $\alpha$. The formula $t_{n}:\alpha_{\tilde{a}}$ expresses the expected probability $a$ that process $t$ will output a value of type $\alpha$. In the expression $t_{n}:\alpha_{f}$, the subscript denotes the empirical probability or frequency associated with output $\alpha$ extracted from $n$ executions of the process denoted by $t$.  By convention, we use $t:\alpha$ without subscript either in the term or in the type to denote a single execution of the process $t$ with deterministic output $\alpha$ (i.e., $r=1$).

\textit{Contexts} $\mathtt{C}$ are sets of conditions, or assumptions on the theoretical probabilities of random variables required for the assertions made on the right of $\vdash$ to hold. Contexts are composed by lists of  expressions of the form $x_{t}:\alpha_{a}$, with or without probability $a$ attached. They presuppose type declarations $\alpha::output$ for the occurring outputs, saying that $\alpha$ is a valid output. We  assume that, even when the actual distribution of probabilities on the random variables of interest of a given system is unknown, a certain assumption on their values can be made, and in particular the assumption that the distribution is unknown can be stated.

\textit{Type declarations} $\mathtt{S}$ include statements of the form: $\mathtt{C}::distribution$, expressing that a given (possibly non empty) list of declarations of the form $x:\alpha_{a}$ is a probability distribution; $\alpha::output$ expressing that $\alpha$ is a valid output value (e.g., a Boolean, or a Natural, or a List, etc.);
$r::\mathbb{R}$ expressing that $r$ is a real number. 
These declarations are generally considered as implicit presuppositions of contexts.

\subsection{Judgements}

We now summarize the forms of judgements admissible in TPTND. A judgement of the form 

\[ x:\alpha \vdash y:\beta_b \]

says that the probability of random variable $y$ to have value $\beta$ is $b$, provided random variable $x$ has value $\alpha$. From the probabilistic point of view, this may be intuitively interpreted as $P(Y=\beta\mid X=\alpha)=b$. In the particular case of an empty context, we may simply interpret \[\vdash y:\beta_{b}\]as $P(Y=\beta)=b$. 

A judgement of the form 

\[ x:\alpha_a \vdash y:\beta_b \]

says that the probability of random variable $y$ to take value $\beta$ is $b$ if $x$ takes value $\alpha$ with probability $a$. In other words, the probability $ P(Y=\beta) $ is a function, say $g$, of $P(X=\alpha)$, and $g(a) = b$. 

A judgement of the form 

\[\vdash t:\alpha\]
says that an event or process $t$ has taken place with output value $\alpha$. 

A judgement of the form

\[ x_{t}:\alpha_a \vdash t:\beta \]

serves the purpose of reasoning about data while making assumptions on its probability distribution; it may be intuitively thought as saying that a process $t$ produces output $\beta$ assuming a variable $x$ has probability $a$ of producing output $\alpha$. This form of judgement is then generalized
and formulated as expected probabilities under some distribution:

\[ x_{t}:\alpha_a \vdash t_{n}:\beta_{\tilde{b}}\]

says that a process $t$ produces output $\beta$ with expected probability $\tilde{b}$ over $n$ executions assuming a variable $x$ has probability $a$ of producing output $\alpha$. 

Next, we can formulate this judgement under more assumptions on  probability values for conditions to be satisfied on the left:

\[ x_{t}:\alpha_{a}, \dots, z_{t}:\nu_{n} \vdash t_{n}:\beta_{\tilde{b}} \]

says that provided variables $x, \dots, z$ with probabilities $a, \dots, n$ to produce outputs $\alpha, \dots \nu$ respectively obtain, the associated process $t$ produces output $\beta$ with expected probability $\tilde{b}$.

The relative frequency of $\beta$ in $n$ executions of $t$ is denoted by

%
%
%
$$x_{t}:\alpha_{a}, \dots, z_{t}:\nu_{n}\vdash t_{n}:\beta_{f}$$
%

Note that as these are judgements based on assumptions, the evaluation of the frequency with which $\beta$ is shown (or its expected probability to occur) requires the evaluation of the conditions: given a judgement of the form $x_{t}:\alpha_a \vdash t:\beta$, the assertion that $t$ produces $\beta$ requires the judgement $\vdash x:\alpha$ to hold; the same is true for multiple assumptions. In the case of a judgement $x_{t}:\alpha_a \vdash t_{n}:\beta_{\tilde{b}}$, the verification of the condition can only be formulated over the same amount of executions as $t$, in which case the judgement $\vdash u_{n}:\alpha_{\tilde{a}}$ is required. 


Finally, a judgement of the form

$$x_{t}:\alpha_{[0,1]}, \dots, z_{t}:\nu_{[0,1]}\vdash t_{n}:\beta_{f}$$
says that $n$ executions of process $t$ display output $\beta$ with a frequency $f$ under an unknown distribution of values to random variables. Note that while no inference is possible from a judgement which is indexed by a range $[0,1]$ within the regular deductive rules of our system, we admit judgements of this form as a premise in the rule to infer trust or distrust in $t_{n}:\beta_{f}$ (see Section \ref{subsec:trust}), as well as in the premise to apply contraction, where all the values in the interval may be applied to an output value and a function is executed to extract one of them (see Section \ref{subsec:structural}).

\subsection{Distribution Construction Rules}

Contexts as a list of assumptions on probability distributions are inductively constructed, see Figure \ref{fig:distributions}. They are generated by judgements of the form $\vdash x:\alpha_{a}$, expressing assignments of probabilities to output variables. An empty set $\{\}$ represents the base case (rule ``base'') to define distributions. Random variables assigning theoretical probability values to outputs can be added to a distribution as long as they respect the standard additivity requirement on probabilities (``extend''). Here $\alpha$ is intended as a metavariable for both atomic and compound output types; the cases for the rule ``extend'' include joint distribution and extension by dependent variables, as justified below in the construction rules for random variables. The variant for deterministic outputs ``extend\_det'' has $x:\alpha\notin \Gamma$ in the second premise and it does not require the additivity condition. { Indeed, the meaning of $x:\alpha$ is that the random variable $x$ has been assigned the value $\alpha$, and this is not in contrast with an assumption of the form $x:\alpha_a$ that establishes a theoretical probability for that variable}. To denote an unknown distribution (``unknown'') we use the (transfinite) construction assigning to all values of interest the interval of all possible probability values, provided the additivity condition of ``extend'' is preserved. Note that, while the fragment of the calculus already introduced may not apply such unknown distribution to infer values on the outputs, as we will see in Section \ref{subsec:trust}, the Trust fragment (see Figure \ref{fig:TrustFragment}) admits the use of an unknown distribution in the second premise of the introduction rules (IT/IUT) and in the conclusion of the corresponding eliminations (ET/EUT). When an unknown distribution is used, the frequency values assigned to the outputs on the right-hand side of our judgement must be understood as observed values under an unknown distribution.

\begin{example}\label{ex:fairdie}
$$\Gamma=\{x_{d}:1_{1/6}, \dots,  x_{d}:6_{1/6}\}$$

is the theoretical probability distribution associated with the outputs of a fair die $d$.
\end{example}

\begin{example}
$$\Gamma=\{ x_{d}:\Heads_{1/2}, x_{d}:\Tails_{1/2}, y_{d'}:\Heads_{1/3}, y_{d'}:\Tails_{2/3} \}$$ 

is the joint theoretical distribution of two independent random variables $x_{d}$ and $y_{d'}$ associated with a fair coin $d$ and biased coin $d'$ respectively.
\end{example}

\begin{figure}
    \begin{prooftree}
        \AxiomC{}
        \RightLabel{base}
        \UnaryInfC{$\{\} :: \distribution $}
        \end{prooftree}

\begin{prooftree}
   \AxiomC{$\Gamma::\distribution$}
   \AxiomC{$x:\alpha_a \notin \Gamma, \forall a\in[0,1]$}
   \AxiomC{$(\sum_{x:\rho_p \in \Gamma}p)+a \leq 1  $}
   \RightLabel{extend}
   \TrinaryInfC{$\Gamma, x:\alpha_a :: \distribution$}
\end{prooftree}

\begin{prooftree}
   \AxiomC{$\Gamma::\distribution$}
   \AxiomC{$x:\alpha\notin \Gamma$}
   \RightLabel{extend\_det}
   \BinaryInfC{$\Gamma, x:\alpha :: \distribution$}
\end{prooftree}

\begin{prooftree}
   \AxiomC{$\alpha::output$}
   \AxiomC{$\dots$}
   \AxiomC{$\omega::output$}
   \RightLabel{unknown}
   \TrinaryInfC{$\{x:\alpha_{[0,1]}, \dots,  x:\omega_{[0,1]}\}:: \distribution$}
\end{prooftree}
        \caption{Distribution Construction}\label{fig:distributions}
\end{figure}

\subsection{Reasoning about random variables}\label{subsec:reason_random}

\begin{figure}
        \begin{prooftree}
        \AxiomC{}
        \RightLabel{identity$_1$}
        \UnaryInfC{$\Gamma, x:\alpha \vdash x:\alpha_1$}
        \end{prooftree}

        \begin{prooftree}
        \AxiomC{}
        \RightLabel{identity$_2$}
        \UnaryInfC{$\Gamma, x:\alpha_a \vdash x:\alpha_a$}
        \end{prooftree}

    \centering
        \begin{prooftree}
        \AxiomC{$\Gamma\vdash x:\alpha_a$}
        \RightLabel{$\bot$}
        \UnaryInfC{$\Gamma\vdash x:\alpha^{\bot}_{1-a}$}
    \end{prooftree}
    
    \begin{prooftree}
            \AxiomC{$\Gamma \vdash x:\alpha_a$}
            \AxiomC{$\Delta \vdash y:\beta_b$}
            \AxiomC{$\Gamma\indep\Delta$}
        \RightLabel{I$\times$}
            \TrinaryInfC{$\Gamma, \Delta \vdash \langle x,y\rangle : (\alpha\times\beta)_{a\cdot b}$}
    \end{prooftree}

        \begin{prooftree}
            \AxiomC{$\Gamma \vdash x : (\alpha\times\beta)_{c}$}
            \AxiomC{$\Gamma \vdash \fst (x) : \alpha_{a}$}
            \RightLabel{E$\times_L$}
            \BinaryInfC{$\Gamma \vdash \snd (x) : \beta_{c/a} $}
    \end{prooftree}
    
    \begin{prooftree}
            \AxiomC{$\Gamma \vdash x_n : (\alpha\times\beta)_{c}$}
            \AxiomC{$\Gamma \vdash \snd (x): \beta_{b}$}
            \RightLabel{E$\times_R$}
            \BinaryInfC{$\Gamma \vdash \fst (x) : \alpha_{c/b} $}
    \end{prooftree}

    \begin{prooftree}
            \AxiomC{$\Gamma \vdash x:\alpha_a$}
            \AxiomC{$\Gamma \vdash x:\beta_b$}
               \RightLabel{I$+$}
                \BinaryInfC{$\Gamma \vdash x : (\alpha+\beta)_{a+b}$}
    \end{prooftree}

        \begin{prooftree}
                \AxiomC{$\Gamma \vdash x : (\alpha + \beta)_{a} $}
                \AxiomC{$\Gamma \vdash x : \alpha_{a'} $}
                \RightLabel{E$+_R$}
                \BinaryInfC{$\Gamma \vdash x : \beta_{a-a'} $}
        \end{prooftree}
        
    \begin{prooftree}
                \AxiomC{$\Gamma \vdash x : (\alpha + \beta)_{b} $}
                \AxiomC{$\Gamma \vdash x : \beta_{b'} $}
                \RightLabel{E$+_L$}
                \BinaryInfC{$\Gamma \vdash x : \alpha_{b-b'} $}
        \end{prooftree}

    \begin{prooftree}
            \AxiomC{$\Gamma, x:\alpha \vdash y : \beta_{b}$}
            \RightLabel{I$\rightarrow$}
            \UnaryInfC{$ \Gamma \vdash [x]y:(\alpha\rightarrow\beta)_{b} $}
    \end{prooftree}

    \begin{prooftree}
          	\AxiomC{$\Gamma \vdash [x] y:(\alpha\rightarrow \beta)_{b}$}
          	\AxiomC{$\Gamma \vdash x: \alpha_{a}$}
         	\RightLabel{E$\rightarrow$}
          	\BinaryInfC{$\Gamma \vdash y.(x:\alpha):\beta_{a\cdot b}$}
  \end{prooftree}
    
where we assume, for $\mathrm{E}+_L, \mathrm{E}+_R, \mathrm{E}\times_L, \mathrm{E}\times_R$, that $\alpha \neq \beta$

\caption{Rules for Random Variables}\label{fig:variablesrules}
\end{figure}

The inferential engine of TPTND for random variables is expressed by the rules in Figure \ref{fig:variablesrules}.

The axiom-like rule ``identity$_1$'' expresses the probabilistic fact that $P(X=\alpha\mid X=\alpha)=1$.  The rule ``identity$_2$'' generalizes the former by expressing that the probability of a random variable $x$ to have output $\alpha$ is $a$, provided variable $x$ has theoretical probability $a$ to output $\alpha$.

The rule ``$\bot$'' encodes the probabilistic complement rule: if the probability of output $\alpha$ for the random variable $x$ is $a$, the probability of $x$ not producing output $\alpha$ is $1-a$.

The joint distribution of two independent random variables can be constructed with conjunction $\times$. We use the notation ``$\Gamma \indep \Delta $'' with the intended meaning that contexts $\Gamma$ and $\Delta$ are independent, meaning that for any $x:\alpha_{a}\in \Gamma$, $y:\beta_{b}\in \Delta$ distinct, is never the case that $\Delta\vdash x:\alpha_{a}$ and $\Gamma\vdash y:\beta_{b}$. Note that distinct variables can be made dependent through the use of $\rightarrow$ and that the same variable can occur in independent contexts. The rule generalizes the independence condition to entire distributions and since it assumes their independence, the joint distribution multiplies their probabilities. The elimination uses $\fst$ and $\snd$ to derive the probability of the other term by dividing probabilities.

Disjunction $+$ merges mutually exclusive outputs for a given variable by adding their probabilities. Elimination of disjunction amounts to subtracting them instead.

The connective $\rightarrow$ is used as a syntactical rewriting of dependent variables. The implication introduction constructs a variable of type $\alpha \rightarrow \beta$ where the probability $b$ of $y$ being $\beta$ is explicitly made dependent on $x$ being $\alpha$. In the elimination, we assign a specific probability $a$ to $x$ being $\alpha$ and construct the application term $x.(y:\alpha)$ which is assigned output $\beta$ with probability $a\cdot b$. 


	\begin{example}[Dependent variables] We want to model a distribution with two random variables $x$ and $y$ corresponding to two coins, where $x$ is fair. Variable $y$ always returns $\Heads$ when $x$ returns $\Tails$, and vice-versa:
			
			\begin{prooftree}
				\AxiomC{$x:\Heads \vdash y:\Tails_{1}$}
				\RightLabel{I$\rightarrow$}
				\UnaryInfC{$\vdash [x]y:\Heads\rightarrow\Tails_{1}$}
			\end{prooftree}
			\begin{prooftree}
				\AxiomC{$x:\Tails \vdash y:\Heads_{1}$}
				\RightLabel{I$\rightarrow$}
				\UnaryInfC{$\vdash [x]y:\Tails\rightarrow\Heads_{1}$}
			\end{prooftree}
			
			We show that $y$ is also fair by eliminating implication:
			\begin{prooftree}
				\AxiomC{$\vdash [x]y:\Heads\rightarrow\Tails_{1}$}
				\AxiomC{$\Gamma\vdash x:\Heads_{0.5}$}
				\RightLabel{E$\rightarrow$}
				\BinaryInfC{$\Gamma\vdash y.(x:\Heads) : \Tails_{1\cdot 0.5}$}
			\end{prooftree}
			
			%
		\end{example}

\subsection{Reasoning about events, samples and expected probabilities}\label{subsec:reason_processes}

The inferential engine for single executions of computational processes performed under a given probability distribution is illustrated in Figure \ref{fig:singleExperimentRules}.

\begin{figure}

       \begin{prooftree}
       \AxiomC{}
       \RightLabel{experiment}
       \UnaryInfC{$\Gamma, x_{t}:\alpha_a \vdash t : \alpha$}
   \end{prooftree}

     \begin{prooftree}
            \AxiomC{$\Gamma \vdash t : \alpha$}
            \AxiomC{$\Delta \vdash u : \beta$}
            \AxiomC{$\Gamma \indep \Delta$}
            \RightLabel{I$\times$}
            \TrinaryInfC{$\Gamma, \Delta \vdash \langle t, u \rangle : (\alpha \times \beta) $}
    \end{prooftree}
    
        \begin{prooftree}
                \AxiomC{$\Gamma \vdash t : (\alpha \times \beta) $}
                \RightLabel{E$\times_L$}
                \UnaryInfC{$\Gamma \vdash \fst(t) : \alpha $}
    \DisplayProof
                \AxiomC{$\Gamma \vdash t : (\alpha \times \beta) $}
                \RightLabel{E$\times_R$}
                \UnaryInfC{$\Gamma \vdash \snd(t) : \beta $}
        \end{prooftree}

 {
    \begin{prooftree}
            \AxiomC{$\Gamma
            \vdash t : \alpha$}
            \RightLabel{I$+_L$}
            \UnaryInfC{$\Gamma \vdash t : (\alpha + \beta) $}
\DisplayProof
\quad 
            \AxiomC{$\Gamma
            \vdash t : \beta$}
            \RightLabel{I$+_R$}
            \UnaryInfC{$\Gamma \vdash t : (\alpha + \beta) $}
    \end{prooftree}\begin{center}
with $x_{t}:\alpha_{a}, x_{t}:\beta_{b}\in \Gamma$
    \end{center}
    }
    \begin{prooftree}
                \AxiomC{$\Gamma \vdash t : (\alpha + \beta) $}
                \AxiomC{$\Gamma \vdash t : (\alpha^{\bot}) $}
                \RightLabel{E$+_R$}
                \BinaryInfC{$\Gamma \vdash t : \beta $}
        \end{prooftree}
        
    \begin{prooftree}
                \AxiomC{$\Gamma \vdash t : (\alpha + \beta) $}
                \AxiomC{$\Gamma \vdash t : (\beta^{\bot}) $}
                \RightLabel{E$+_L$}
                \BinaryInfC{$\Gamma \vdash t : \alpha $}
        \end{prooftree}

 \begin{prooftree}
 \AxiomC{$\Gamma, x:\alpha_{a}\vdash t:\beta$}
 \RightLabel{I$\rightarrow$}
 \UnaryInfC{$\Gamma\vdash [x]t:(\alpha\rightarrow \beta)_{a}$}
\DisplayProof
\quad
 \AxiomC{$\Gamma\vdash [x]t:(\alpha\rightarrow \beta)_{a}$}
 \AxiomC{$\Delta\vdash u:\alpha$}
 \RightLabel{E$\rightarrow$}
 \BinaryInfC{$\Gamma , \Delta\vdash t.[u:\alpha]:\beta$}
 \end{prooftree}
        
where we assume, for $\mathrm{E}+_L, \mathrm{E}+_R$, that $\alpha \neq \beta$ and $x_t$ is a designated variable that we associate to the term $t$.
        
     \caption{Single-experiment rules.}
    \label{fig:singleExperimentRules}
\end{figure}

The rule ``experiment'' is the declaration of the deterministic output $\alpha$ (without subscript) of one experiment or execution of process $t$ (without subscript) under a random variable which assigns theoretical probability $a$ to output $\alpha$ (and hence derives $\alpha$ with that probability). Experiments can then be combined for dependent and independent results, in the same way done above for random variables.
{
We remark that the added condition $x_t:\alpha_a , x_t :\beta _b \in \Gamma$ to the I$+_L$ and I$+_r$ rules is needed to guarantee that, when $\alpha +\beta $ is inferred from $\alpha$ (respectively from $\beta$), $\beta $ (respectively $\alpha$) is indeed a possible output of $t$. Notice moreover that, given any term $t$, we associate to it a designated variable $x_t$, as indicated in Figure \ref{fig:singleExperimentRules}.} 

Inferences on processes which assume some probability value for a random variable are possible under the construction of the $\rightarrow$ connective: the introduction states that if process $t$ outputs value $\beta$ (with $r=1$) provided the condition that $x$ is assigned value $\alpha$ is satisfied with probability $a$, then one constructs term $[x]t$ which assigns the probability $a\cdot 1$ to the output $(\alpha \rightarrow \beta)$; the elimination says that on condition that a process $u$ with value $\alpha$ obtains ($r=1$), process $t$ using information $u$ outputs value $\beta$ ($r=1$), thereby eliminating the probabilistic condition.

\begin{figure}

   \begin{prooftree}
       \AxiomC{}
       \RightLabel{expectation}
       \UnaryInfC{$x_{t}:\alpha_{a}\vdash t_{n}:\alpha_{\tilde{a}}$}
\end{prooftree}       
   
   \begin{prooftree}
       \AxiomC{$\Gamma\vdash t^1:\alpha^1$}
       \AxiomC{$\dots$}
       \AxiomC{$\Gamma \vdash t^{n}:\alpha^n$}
       \RightLabel{sampling}
    \TrinaryInfC{$\Gamma\vdash t_{n} : \alpha_{f}$}
  \end{prooftree}
  \begin{center}
  where $f = \frac{\mid \{ i \mid \alpha^i = \alpha \} \mid}{n}$
  \end{center}

 \begin{prooftree}
 \AxiomC{$\Gamma \vdash t_n:\alpha_{f}$}
 \AxiomC{$\Gamma \vdash t_m:\alpha_{f'}$}
     \RightLabel{update}
     \BinaryInfC{$\Gamma \vdash t_{n+m} : \alpha_{f\cdot(n/(n+m))+f'\cdot(m/(n+m))} $}
      \end{prooftree}

      \begin{prooftree}
  
            \AxiomC{$\Gamma \vdash t_n: \alpha_{\tilde{a}}$}
            \AxiomC{$\Gamma \vdash t_n:\beta_{\tilde{b}}$}
            \RightLabel{I$+$}
            \BinaryInfC{$ \Gamma \vdash t_n : (\alpha + \beta)_{\tilde{a} + \tilde{b}} $}
    \end{prooftree}
    
    \begin{prooftree}
            \AxiomC{$\Gamma \vdash t_n : (\alpha + \beta)_{\tilde{c}}$}
            \AxiomC{$\Gamma \vdash t_n : \alpha_{\tilde{a}}$}
            \RightLabel{E$+_L$}
            \BinaryInfC{$\Gamma \vdash t_n : \beta_{\tilde{c}-\tilde{a}} $}
    \end{prooftree}
    
    \begin{prooftree}
            \AxiomC{$\Gamma \vdash t_n : (\alpha + \beta)_{\tilde{c}}$}
            \AxiomC{$\Gamma \vdash t_n : \beta_{\tilde{b}}$}
            \RightLabel{E$+_R$}
            \BinaryInfC{$\Gamma \vdash t_n : \beta_{\tilde{c}-\tilde{b}} $}
    \end{prooftree}

    \begin{prooftree}
            \AxiomC{$\Gamma \vdash t_n : \alpha_{\tilde{a}}$}
            \AxiomC{$\Delta \vdash u_n : \beta_{\tilde{b}}$}
             \AxiomC{$\Gamma \indep \Delta$}
            \RightLabel{I$\times$}
            \TrinaryInfC{$\Gamma, \Delta \vdash \langle t, u \rangle_{n} : (\alpha \times \beta)_{\widetilde{a \cdot b}} $}
    \end{prooftree}

         \begin{prooftree}
            \AxiomC{$\Gamma \vdash t_n : (\alpha\times\beta)_{\tilde{c}}$}
            \AxiomC{$\Gamma \vdash \fst (t)_n : \alpha_{\tilde{a}}$}
            \RightLabel{E$\times_L$}
            \BinaryInfC{$\Gamma \vdash \snd (t)_{n} : \beta_{\widetilde{c/a}} $}
    \end{prooftree}
    
    \begin{prooftree}
            \AxiomC{$\Gamma \vdash t_n : (\alpha\times\beta)_{\tilde{c}}$}
            \AxiomC{$\Gamma \vdash \snd (t)_n : \beta_{\tilde{b}}$}
            \RightLabel{E$\times_R$}
            \BinaryInfC{$\Gamma \vdash \fst (t)_{n} : \alpha_{\widetilde{c/b}} $}
    \end{prooftree}

    
    
\begin{prooftree}
\AxiomC{$\Gamma, x_{t}:\alpha_{a}\vdash t_{n}:\beta_{\tilde{b}}$}
\RightLabel{I$\rightarrow$}
\UnaryInfC{$\Gamma\vdash [x]t_{n}:(\alpha\rightarrow \beta)_{[a]\tilde{b}}$}
\end{prooftree}

\begin{prooftree}
\AxiomC{$\Gamma\vdash [x]t_{n}:(\alpha\rightarrow \beta)_{[a]\tilde{b}}$}
\AxiomC{$\Delta\vdash u_{n}:\alpha_{\tilde{a}}$}
\RightLabel{E$\rightarrow$}
\BinaryInfC{$\Gamma , \Delta\vdash t_{n}.[u_{n}:\alpha]:\beta_{\tilde{a}\cdot \tilde{b}}$}
\end{prooftree}

where we assume, for $\mathrm{E}+_L, \mathrm{E}+_R$, that $\alpha \neq \beta$ and $x_t$ is a designated variable that we associate to the term $t$.

 \caption{Expected Probabilities and Sampling Rules}\label{fig:connectiverules}
\end{figure}

Rules from Figure \ref{fig:singleExperimentRules} allow to perform complex experiments. We now want to generalize to multiple executions of an experiment in Figure \ref{fig:connectiverules}. This means to evaluate two things: first, the expected value of the probability of an output given $n$ executions of an experiment; second, the frequency of a given output in $n$ executions of an experiment. As expected probabilities are computed according to their theoretical counterparts, we also define logical operations on them.

The axiom-like rule ``expectation'' says that given the theoretical probability $a$
that the random variable associated with process $t$ will output value $\alpha$, the expected probability of having $\alpha$ over $n$ executions of $t$ is $\tilde{a}$. Provided the maximally expected probability of having $\alpha$ over $n$ executions is exactly $a$, we use $\tilde{a}$ as a notational device to distinguish theoretical and expected probabilities. 

In the rule ``sampling'' we denote a frequency value $f$ of output $\alpha$ from $n$ actual executions of process $t$, each with its own deterministic output, where $f$ is the number of cases in which $\alpha$ has occurred as output over the total number  of cases $n$. As a general case, the distribution under which $t$ is executed can be taken to be unknown. An experiment can be repeated several times for a given process $t$, and the rule ``update'' tells how to calculate relative frequencies on the various number of executions, each provided by an instance of ``sampling''. Note that in the examples below to aid readability we will sometimes formulate the rule ``sampling'' without appropriate premises (i.e., the corresponding several instances of the rule ``experiment''), but will always assume those are properly executed. Moreover, ``update'' is obviously a shorthand for an extended ``sampling'', but it is notationally useful when its premises occur at distinct stages of a derivation tree.

\begin{example}

Consider a die of which we know nothing, and in particular it is unknown whether it is fair or not. We denote this state of affairs with the distribution $\Gamma=\{x:1_{[0,1]}, \dots, x:6_{[0,1]}\}$, as produced by the type-definition ``unknown''. We execute two sets of four experiments to evaluate the frequency of output $1$.

\begin{prooftree}
            \AxiomC{$\Gamma\vdash d^{1}:1$}
            \AxiomC{$\Gamma\vdash d^{2}:1$}
            \AxiomC{$\Gamma\vdash d^{3}:5$}
            \AxiomC{$\Gamma\vdash d^{4}:6$}
            \RightLabel{sampling}
            \QuaternaryInfC{$\Gamma\vdash d_4:1_{1/2}$}
                  \end{prooftree}

\begin{prooftree}

            \AxiomC{$\Gamma\vdash d^{1}:3$}
            \AxiomC{$\Gamma\vdash d^{2}:1$}
            \AxiomC{$\Gamma\vdash d^{3}:5$}
            \AxiomC{$\Gamma\vdash d^{4}:6$}
            \RightLabel{sampling}
            \QuaternaryInfC{$\Gamma\vdash d_4:1_{1/4}$}
        \end{prooftree}

\begin{prooftree}
\AxiomC{$\mathcal{D}_{1}$}
\AxiomC{$\mathcal{D}_{2}$}
\RightLabel{update}
\BinaryInfC{$\{x:1_{[0,1]}, \dots, x:6_{[0,1]}\}\vdash d_{8} :1_{3/8}$}
        \end{prooftree}
\end{example}
\bigskip

Rule I$+$ introduces disjunction: intuitively, if under a distribution $\Gamma$ a sample of process $t$ produces output $\alpha$ with an expected probability $\widetilde{a}$, and output $\beta$ with expected probability $\widetilde{b}$, then the expected probability of output $\alpha$ or output $\beta$ by a run of process $t$ is $\widetilde{a}+\widetilde{b}$. Note that this rule can also be seen as an abbreviation of an instance of \textit{sampling}  from $n$ premises each of which is the result of introducing $+$ on a single experiment whose output is either $\alpha$ or $\beta$:

\begin{example}
The rule I$+$ followed by a sampling for a process with different outputs (including outputs possibly different from $\alpha$ or $\beta$):  
{\tiny 
   \begin{scprooftree}{0.80}   
     
     \AxiomC{$\Gamma, x_{t}:\alpha_{a}, x_{t}:\beta_{b}\vdash t^{1}:\alpha$}
     %
%
     \AxiomC{$\Gamma, x_{t}:\alpha_{a}, x_{t}:\beta_{b}\vdash t^{1}:\beta$}
\RightLabel{I$+$}
     \BinaryInfC{$\Gamma, x_{t}:\alpha_{a}, x_{t}:\beta_{b}\vdash t^{1}:(\alpha + \beta)$}

\AxiomC{$\dots$}
\UnaryInfC{$\dots$}

    \AxiomC{$\Gamma, x_{t}:\alpha_{a}, x_{t}:\beta_{b}\vdash t^{n}:\alpha$}
%
%
     \AxiomC{$\Gamma, x_{t}:\alpha_{a}, x_{t}:\beta_{b}\vdash t^{n}:\beta$}
\RightLabel{I$+$}
     \BinaryInfC{$\Gamma, x_{t}:\alpha_{a}, x_{t}:\beta_{b}\vdash t^{n}:(\alpha + \beta)$}

     
    \RightLabel{sampling}
    \TrinaryInfC{$\Gamma, x_{t}:\alpha_{a},x_{t}:\beta_{b}\vdash t_{n} : (\alpha+\beta)_{f}$}
  \end{scprooftree}
  }
\end{example}
\bigskip
Conversely by E$+_{R}$ (respectively E$+_{L}$): if under a distribution $\Gamma$ a process $t$ produces output $\alpha$ or output $\beta$ with expected probability $\widetilde{c}$, and the former (respectively, the latter) output has probability $\widetilde{a}$ (respectively, $\widetilde{b}$), then it produces the latter (respectively, the former) output with probability $\widetilde{c}-\widetilde{a}$ (respectively $\widetilde{c}-\widetilde{b}$). Note that also this rule can be seen as an abbreviation for an instance of ``sampling''  from $n$ premises each of which is the result of introducing $+$ on a single experiment whose output is either $\alpha$ or $\beta$, followed by an instance of ``sampling'' of either output:

\begin{example}
Consider an application of the rule E$+$ used to derive the expected probability of one among different possible outputs of a given process:
   \begin{prooftree}
    \AxiomC{$\vdots$}
    \noLine
    \UnaryInfC{$\Gamma, x_{t}:\alpha_{a},x_{t}:\beta_{b}\vdash t_{n} : (\alpha+\beta)_{\tilde{a+b}}$}
    \AxiomC{$\Gamma, x_{t}:\alpha_{a},x_{t}:\beta_{b}\vdash t_{n}:\beta_{\tilde{b}}$}
    \RightLabel{E$+$}
\BinaryInfC{$\Gamma, x_{t}:\alpha_{a}, x_{t}:\beta_{b}\vdash t_{n}:\alpha_{\widetilde{(a+b)-b}}$}

  \end{prooftree}

followed by a sampling: 
\begin{prooftree}
         \AxiomC{$\Gamma, x_{t}:\alpha_{a},x_{t}:\beta_{b}\vdash t^1:\alpha^1$}
    \AxiomC{$\dots$}
     \AxiomC{$\Gamma, x_{t}:\alpha_{a},x_{t}:\beta_{b}\vdash t^{n}:\alpha^n$}
     \RightLabel{sampling}
     \TrinaryInfC{$\Gamma, x_{t}:\alpha_{a},x_{t}:\beta_{b}\vdash t_{n} : \alpha_{f'}$}
\end{prooftree}

\end{example}
\bigskip

\begin{example}
Consider a fair die $d$ whose distribution of outputs is encoded in $\Gamma$:

\begin{prooftree}
\AxiomC{$\mathcal{D}_{1}$}
\RightLabel{expectation}
\UnaryInfC{$\Gamma \vdash d_{6} : 1_{\widetilde{1/6}}$}
\AxiomC{$\mathcal{D}_{2}$}
\RightLabel{expectation}
\UnaryInfC{$\Gamma \vdash d_{6} : 3_{\widetilde{1/6}} $}
\RightLabel{I$+$}
\BinaryInfC{$\Gamma \vdash d_{6} : (1+3)_{\widetilde{1/3}} $}
\end{prooftree}

now followed by a sampling whose observed outputs are the series $1$, $3$, $5$, $1$, $2$, $6$ in a sample of $6$ throws. In this sample the frequency of $1$ or $3$ outputted by $d$ is given in our syntax as follows

\begin{prooftree}
\AxiomC{$\Gamma\vdash d^{1}:(1+...+6)^{1}$}
\AxiomC{$\dots$}
\AxiomC{$\Gamma\vdash d^{6}:(1+...+6)^{6}$}
\RightLabel{sampling}
\TrinaryInfC{$\Gamma \vdash d_{6} : 1_{3/6}$}
\end{prooftree}


\end{example}

Note that in both examples above, we have two conclusions with the same output, one which expresses the expected probability, one which expresses actual frequency over a number of executed trials. In the following it will be our aim to define operators to measure their distance.

The typing rule I$\times$ says that if two distinct processes (and thus independent: recall that we denote by different terms distinct processes, hence the distinct distributions $\Gamma, \Delta$) $t$ and $u$ produce expected probabilities $\widetilde{a}$ and $\widetilde{b}$ for outputs $\alpha$ and $\beta$ and samples of size $n$ under distributions $\Gamma$ and $\Delta$ respectively; the expected probability of jointly getting output $\alpha$ and $\beta$ from the pair $\langle t,u\rangle$ under the joint distributions $\Gamma,\Delta$, is given by $\widetilde{a\cdot b}$. 
By E$\times$, given a process which provides two distinct output types with associated probability $\tilde{c}$, and knowing the expected probability $\tilde{a}$ of the first composing process in the given sample, we infer the probability of the second process to produce the second output as $\widetilde{c/a}$; the dual rule has $snd$ in the second premise and $fst$ in the conclusion.

\begin{example}\label{ex:conj}
{ 
Let $d$ and $g$ be two dice associated with distributions $\Delta$ and $\Gamma$ respectively. We use distinct names for the distribution of distinct processes, although they might turn out to be identical, e.g., in the case of two fair dice. Assuming independence on $\Delta,\Gamma$, output expectation on a sample of $18$ throws of dice $d$ and $g$ is as follows:

\[\infer[\textrm{I}\times]{\Delta, \Gamma \vdash \langle d, g \rangle_{18} : (1 \times 2)_{\widetilde{1/9}} }{\infer[\textrm{expectation}]{\Delta \vdash d_{18} : 1_{\widetilde{2/3}}}{} &\infer[\textrm{expectation}]{\Gamma \vdash g_{18} : 2_{\widetilde{1/6}} }{}}\]




An observation on a sampling of $18$ throws of dice $d$ might shows the series:\[4,1,2,1,1,1,6,1,3,1,1,1,3,1,1,1,1,5\]
and of $18$ throws of dice $g$ the series:
\[1,3,5,2,3,1,3,4,5,2,6,4,1,3,1,3,4,2\]


The joint sampling, constructed by pairing the $n$th result of the first dice with the $n$th result of the second dice, is the following:
\[(4,1),(1,3),(2,5),\underline{(1,2)},(1,3),(1,1),(6,3),(1,4),(3,5),\]
\[\underline{(1,2)},(1,6),(1,4),(3,1),(1,3),(1,1),(1,3),(1,4),(5,2)\]
The experiment was a success. Indeed, among the $18$ pairs, exactly two are of the form $(1,2)$. Hence, as predicted, the frequency of $(1,2)$ in this list is $2/18=1/9$. This outcome of the experiment on the pair of dice $\langle d,g \rangle $ can de formalized as follows:
\[
\infer{\Delta, \Gamma \vdash \langle d, g \rangle_{18} : (1 \times 2)_{2/18} }{ \Delta, \Gamma \vdash \langle d, g \rangle^{1} : (4 \times 1)^{1} \dots  \Delta, \Gamma \vdash \langle d, g \rangle^{18} : (5 \times 2)^{18}}\]

}



\end{example}
\bigskip

\begin{example}
Given two fair dice $d,g$ whose distributions are encoded in respectively $\Delta, \Gamma$. Again assuming independence, we could have  $$\Delta \times \Gamma = \{ \langle x_d, x_g \rangle : 1\times 1_{1/36}, \langle x_d, x_g \rangle : 1\times 2_{1/36}, \dots, \langle x_d, x_g \rangle : 6\times 6_{1/36} \}$$ 

While $\langle d,g\rangle_{24} : (n\times m)_{\tilde{n}\cdot \tilde{m}}$ expresses the expected probability $\tilde{n}\cdot \tilde{m}$ of any pair $(n\times m)$ over $24$ launches, for any specific such sequence of launches like the following: $(5,5)$, $(2,1)$, $(3,6)$, $(3,1)$, $(1,4)$, $(5,3)$, $(5,4)$, $(3,6)$, $(6,3)$, $(5,2)$, $(2,3)$, $(3,3)$, $(2,3)$, $(4,1)$, $(4,5)$, $(1,2)$, $(6,1)$, $(6,6)$, $(3,6)$, $(6,6)$, $(6,6)$, $(6,5)$, $(4,4)$, $(2,6)$, the frequency of any specific output results from an application of the sampling rule, such as:

\begin{prooftree}
\AxiomC{$\Delta, \Gamma\vdash \langle d,g\rangle^{1} : (5\times 5)^{1}$}
\AxiomC{\dots}
\AxiomC{$\Delta, \Gamma\vdash \langle d,g\rangle^{24} : (2\times 6)^{24}$}
\RightLabel{sampling}
\TrinaryInfC{$\Delta\times \Gamma \vdash \langle d,g\rangle_{24} : (1\times 2)_{1/24} $}
\end{prooftree}

\end{example}

The rule I$\rightarrow$ says that, if assuming a variable has value $\alpha$ with probability $a$ we can execute a process $t$ which, executed $n$ times, shows output $\beta$ with expected probability $\tilde{b}$, then we can construct the term $[x]t_{n}$ which taken in input a process of value $\alpha$ with probability $a$, will return a process of output value $\beta$ with expected probability $\tilde{b}$ depending on $a$. Note that in $[a]\tilde{b}$, the sub-expression $[a]$ is only a notation to keep track of the theoretical probability assigned to the output in the antecedent. The corresponding elimination E$\rightarrow$ allows to verify the expected probability of $\beta$: it considers a term $[x]t_{n}$ of type $(\alpha \rightarrow \beta)$ which has expected probability $\tilde{b}$ depending on the probability $a$ of value $\alpha$, and providing in place of $x$ a process $u$ with expected probability $\tilde{a}$ formulated under a fresh random variable $y:\alpha_{a}$, it returns the process $t$ which will display output $\beta$ with an expected probability $\widetilde{a}\cdot\widetilde{b}$. The frequency of $\beta$ in a sample of $n$ experiments should therefore approximate this value.

\begin{example}
Consider a coin $c$, which we assume to be fair, i.e., whose assumed probability distribution on its outputs has the form

$$
\Gamma:=\{x_{c}:H_{0.5}, x_{c}:T_{0.5}\}
$$

We might devise a system which if the fair coin  $c1$ returns $Head$, it will make another coin $c2$ return $Tail$ with a frequency of $45\%$ of the times over $1000$ throws:

\begin{prooftree}
\AxiomC{$\Gamma, x_{c1}:H_{0.5}\vdash c2_{1000}:T_{\widetilde{0.45}}$}
\RightLabel{I$\rightarrow$}
\UnaryInfC{$\Gamma\vdash [x]c2_{1000}:(H\rightarrow T)_{[0.5]\widetilde{0.45}}$}
\end{prooftree}
{ In order to verify that the system is actually working in this way, we shall throw $c1$ a $1000$ times. If we obtain the information that it returns heads a number of times which is close enough to the $50\%$ of the total number of throws, we can use  this information to obtain information on $c2$. Indeed, on the basis of the  output of $c1$, $c2$ will return tails $45\%$ of the times in which $c1$ returns heads. Therefore, we know that $c2$ will return tails the $45\%$ of the $50\%$ of the $1000$ throws:}
\begin{prooftree}
\AxiomC{$\vdash [x_{c1}]c2:(H\rightarrow T)_{[0.5]\tilde{0.45}}$}
\AxiomC{$x_{c1}:H_{0.5}\vdash c1_{1000}:H_{\widetilde{0.5}}$}
\RightLabel{E$\rightarrow$}
\BinaryInfC{$x_{c1}:H_{0.5}\vdash c2_{1000}.[c1_{1000}:H]: T_{\widetilde{0.225}}$}
\end{prooftree}
%
This final judgment says that we expect tails from the second coin and heads in the first coin with a frequency of $22.5\%$ over $1000$ times.
\end{example}

\subsection{Bayesian Updating}\label{subsec:bayesian}

Random variables and experiments can be linked by the formulation of a form of $\rightarrow$ connective by which random variables are updated using the information generated by trials. The rules in Figure \ref{fig:rulesprior} express such process of updating prior probabilities in the light of new information provided by experiments. It corresponds to the use of the $\rightarrow$ connective introduced from a judgement of the form $x:\alpha^{1}_
{a}, \dots, x:\alpha^{n}_{a^{n}}\vdash y:\beta_{b}$, and eliminated with an additional premise of the form $\vdash t_{n}:\alpha_{f}$. The formula for $b$ implements the Bayesian Updating in the specific case of a finite prior and a binomial likelihood function, which is the case we can express within our calculus. In particular, the prior probability associated with the point $i$ under consideration is $b_i$, and its likelihood is $\binom{n}{n\cdot f} a_i^{f\cdot n} (1-a_i)^{n-f\cdot n}$. The denominator marginalizes as usual over all points in the prior. Note that the binomial coefficient gets simplified out as it appears both in the numerator and denominator of the expression for $b$.

In the introduction, assuming that the probability $b$ of variable $y$ being assigned output $\beta$ resulted from a sampling of $n$ experiment within a model in which probabilities associated to random variables $x, \dots, x^{n}$ to have values $\alpha, \dots, \alpha^{n}$ is respectively $a, \dots, a^{n}$, we introduce a list of implications which take all the exclusive hypotheses assigning a given probability $a_{i}$ to variables $x_{i}$ having value $\alpha_{i}$ and assigning a given variable $y$ probability $b$ to have value $\beta$; in the elimination: given such an implication with a list of exclusive hypotheses, and the result of a sampling on an experiment outputting $\alpha$ with frequency $f$, we update the prior probability $b_{i}$ of $\beta$ to $b$.

\begin{figure}

\begin{prooftree}
\AxiomC{$x:\alpha_{a_1} \vdash y:\beta_{b_1}$, $\dots$, $x:\alpha_{a_m} \vdash y:\beta_{b_m}$ }
\AxiomC{ $\sum_{i=1}^m b_i = 1$ }
\RightLabel{I-P}
\BinaryInfC{ $\{ \vdash [x]y:(\alpha \rightarrow \beta)_{[a_i]b_i} \mid 1\leq i\leq m \}$ }
\end{prooftree}

\begin{prooftree}
\AxiomC{ $\{ \vdash [x]y:(\alpha \rightarrow \beta)_{[a_i]b_i} \mid 1\leq i\leq m,\sum_i b_i = 1\}$ }
\AxiomC{ $\Gamma, x:\alpha_{a_{i}} \vdash t_n : \alpha_{f}$ }
\RightLabel{E-P}
\BinaryInfC{ $\Gamma, x:\alpha_{a_{i}} \vdash (y.(t_n:\alpha_f)) : \beta_{b}$ }
\end{prooftree}
where \[ b = \frac{\overbrace{\binom{n}{n\cdot f} a_i ^{f\cdot n} (1-a_i)^{n-f\cdot n}}^{likelihood}\overbrace{b_i}^{prior}}{\underbrace{\binom{n}{n\cdot f} \sum_{j=1}^m a_j ^ {f\cdot n} (1-a_j)^{n-f\cdot n} b_j}_{marginal}} = \frac{a_i ^{f\cdot n} (1-a_i)^{n-f\cdot n}b_i}{ \sum_{j=1}^m a_j ^ {f\cdot n} (1-a_j)^{n-f\cdot n} b_j} \]
    \caption{Rules to Update Prior Probabilities}
    \label{fig:rulesprior}
\end{figure}

\begin{example}\label{ex:priorupdate}
In this example, we show how we can model the process of updating a discrete prior in a bayesian fashion. Let us assume we have five coins of three types. Coins of type $A$ are fair, whereas coins of types $B$ and $C$ are biased, with a probability of $0.8$ and $0.9$ of landing heads respectively. We have $2$ coins of type $A$, $1$ coin of type $B$ and $2$ coins of type $C$. We blindly draw one of these five coins and toss it three times. Three coin tosses give that the coin lands heads $2$ times. Our prior knowledge can be modelled with the three following sequents:
\[x:\Heads_{0.5} \vdash y:\textit{Draw}_{2/5}\]
\[x:\Heads_{0.8} \vdash y:\textit{Draw}_{1/5}\]
\[x:\Heads_{0.9} \vdash y:\textit{Draw}_{2/5}\]
that we denote by $\mathcal{P}$. They encode the discrete prior in Figure \ref{fig:prior}.
We focus on the first of these three sequents, corresponding to the hypothesis that the coin we have drawn is of type $A$. We now consider data $\mathcal{D}=\{ \Heads, \Heads, \Tails \}$, which we represent with the sequent \[x : \Heads_{0.5} \vdash c_{3}:\Heads_{2/3}\]and calculate the posterior probability of hypothesis $x:\Heads_{0.5} \vdash y:\textit{Draw}_{2/5}$ as follows:
\begin{prooftree}
    \AxiomC{$\mathcal{P}$}
    \AxiomC{$x:\Heads_{0.5} \vdash c_{3}:\Heads_{2/3}$}
    \BinaryInfC{$ x:\Heads_{0.5} \vdash y.(c_3:\Heads_{2/3}) : \textit{Draw}_{a} $}
\end{prooftree}
where \[a = \frac{0.5^2 \cdot (1-0.5) \cdot (2/5)}{0.5^2 \cdot (1-0.5) \cdot (2/5)+ 0.8^2 \cdot (1-0.8) \cdot (1/5) + 0.9^2 \cdot (1-0.9) \cdot (2/5)} \approx 0.46 \]

Therefore, the hypothesis that $x:\Heads_{0.5}$ has increased its probability from $2/5$ to $0.46$.

\end{example}

\begin{figure}
\centering
\begin{tikzpicture}
\begin{axis}[xlabel={Probability of $\Heads$},ylabel={Prior probability}]
\addplot[ybar] plot coordinates
	{(0.5,2/5) (0.8,1/5) (0.9,2/5)};
\end{axis}
\end{tikzpicture}
\caption{A graphical representation of the discrete prior in Example \ref{ex:priorupdate}. The $x$ axis represents the probability of the coin landing $\Heads$; the $y$ axis represents the corresponding prior probability. }\label{fig:prior}
\end{figure}

Note that once the prior has been updated in the conclusion of the previous tree, it is still possible to introduce again the implication 

$$\vdash [x](y.c_{3}:Heads_{2/3}):(Heads \rightarrow \textit{Draw})_{[0.5]a}$$ 
and update it again with some new data.

\subsection{Trust Fragment}\label{subsec:trust}

TPTND is designed to verify trustworthiness of probabilistic computations through the rules in Figure \ref{fig:TrustFragment}. 

{ The intuition we want to implement in the calculus is that the control of a certain distance between a distribution of reference and the observed probability of an event is formally checkable. The choice of a confidence interval with respect to the expected probability of an event is a design choice which can be otherwise devised, see e.g. \cite{DBLP:conf/aiia/PrimieroD22} for a version using KL as a distance in order to evaluate for biases. Moreover, while the rule mimics in a certain sense standard statistical inference properties, the calculus allows to combine such analysis with an automated deduction approach, which also allows to define some useful structural properties on the derivation mimicking the probabilistic process, thereby allowing formal properties to be proven in the next section. Formally, we interpret the above as follows. An experiment $\Delta\vdash u_{n}:\alpha_{f}$ where $\Delta$ is possibly unknown -- that is, $\Delta$ is of the form $\{x:\alpha_{[0,1]}, \dots,  x:\omega_{[0,1]}\}$ as obtained by the ``unknown'' rule in Figure \ref{fig:distributions} -- should be evaluated according to some metric against its intended or expected model,  in turn expressed as a transparent distribution $\Gamma$. If the difference between the frequency $f$ of output $\alpha$ in $n$ experiments $u$ and its theoretical probability $a$ in the relevant intended and transparently formulated distribution $\Gamma$ remains below a critical threshold, parametric with respect to the number of samples, then the process $u$ outputting $\alpha$ with frequency $f$ can be considered trustworthy (rule IT). The parametric threshold $\epsilon(n)$ is domain-specific and it may depend on the application. Then one can start reasoning on processes considered trustworthy: if $u_{n}:\alpha_{f}$ is trustworthy under distribution $\Gamma$, then there must exist an interval of probability values $[a-\epsilon(n),a+\epsilon(n)]$  for output $\alpha\in \Gamma$ within which $u_{n}:\alpha_{f}$ can be correctly sampled from $\Gamma$ (rule ET). Note that Contraction from Section \ref{subsec:structural} can be applied to select the most appropriate value within such range.}

Dual rules for negative trust can be devised. Rule IUT for the introduction of negative trust (or Untrust) expresses the failure on the condition for trust: if the value set for the difference between the frequency $f$ of the output $\alpha$ in the experiment and the theoretical probability $a$ in the relevant intended and transparently formulated distribution $\Gamma$ is surpassed, the process can be labeled untrustworthy. The rule EUT serves the purpose of eliminating the negative trust function, i.e., it allows to start reasoning on processes considered untrustworthy: if $u_{n}:\alpha_{f}$ is considered untrustworthy under distribution $\Gamma$, then there exists a family of probability values in the interval $[0,1]$ excluding the interval $[a-\epsilon(n),a+\epsilon(n)]$  for output $\alpha\in \Gamma$ within which $u_{n}:\alpha_{f}$ can be correctly sampled from $\Gamma$. Hence, the rule identifies the range of theoretical values which make $f$ a fair frequency, and again Contraction from Section \ref{subsec:structural} can be applied to select the most appropriate value within such range.

\begin{figure*}
    \begin{prooftree}
            \AxiomC{$ \Gamma \vdash x:\alpha_a$}
            \AxiomC{$ \Delta\vdash u_n : \alpha_{f} $}
            \AxiomC{$ \mid a - f \mid \leq \epsilon(n) $}
            \RightLabel{IT}
            \TrinaryInfC{$ \Gamma,\Delta \vdash \Trust(u_n : \alpha_{f} ) $}
        \DisplayProof
        \\[1em]

         \AxiomC{$\Gamma \vdash \Trust(u_n : \alpha_{f}) $}
            \RightLabel{ET}
            \UnaryInfC{$\Gamma , x_{u} : \alpha_{[a-\epsilon(n),a+\epsilon(n)]} \vdash u_{n} : \alpha_{f} $}
\end{prooftree}

  \begin{prooftree}
            \AxiomC{$ \Gamma \vdash x:\alpha_a$}
            \AxiomC{$\Delta\vdash u_n : \alpha_{f} $}
            \AxiomC{$ \mid a - f\mid > \epsilon(n) $}
            \RightLabel{IUT}
            \TrinaryInfC{$ \Gamma, \Delta \vdash  \UTrust(u_n : \alpha_{f} ) $}
            \DisplayProof
        \\[1em]

            \AxiomC{$\Gamma \vdash \UTrust(u_n : \alpha_{f} ) $}
            \RightLabel{EUT}
            \UnaryInfC{$\Gamma,x_{u} : \alpha_{[0,1]-[a-\epsilon(n),a+\epsilon(n)]} \vdash u_{n} : \alpha_{f} $}
            \end{prooftree}
     
    \caption{Trust Fragment}
    \label{fig:TrustFragment}
\end{figure*}

Note, in particular, that the first and second premise of each introduction rule allow for distinct distributions. While it is required that the first context expresses a distribution in which it is included the theoretical probability for a program of output $\alpha$, this means that it is possible to evaluate for trustworthiness against it a program $u$ of the same type even if the latter is executed under a distinct, and possibly opaque or even unknown distribution $\Delta$. The easy case is obviously where $\Gamma \equiv \Delta$, but otherwise $\Delta$ might be inaccessible (i.e. formally unknown) while $\Gamma$ represents the intended or assumed model of $u$. 

Trust evaluation, as mentioned, is parametric in the number of executions, and this renders it a potentially dynamic process. For instance, one may take $\epsilon(n)$ to correspond to the $95\%$ confidence interval under the Normal approximation to the Binomial Distribution. Then it is possible to derive the trustworthiness of some untrustworthy process (and vice versa) with some (very small) probability. The probability of this happening can be made arbitrarily small by extending the sample associated with the process.

\begin{example}\label{ex:OpaqueContext} Let $u$ be a die which we have been provided with and we know nothing about, its distribution being denoted as $\Delta$; and let $\Gamma$ be the distribution defined in Example \ref{ex:fairdie}, representing a fair die. In this example, we focus on output $5$. Let's assume that experimenting under the unknown context $\Delta$ gives us the following result:

\begin{prooftree}
        \AxiomC{$\dots$}
        \AxiomC{$\dots$}
        \AxiomC{$\dots$}

        \RightLabel{sampling}
        \TrinaryInfC{$\Delta \vdash u_{10} : 5_{1/2}$}
\end{prooftree}

Our aim is to decide whether the system $u$ can be assumed to represent a sample under a fair distribution over outputs $\{ 1, 2, \dots, 6 \}$, i.e., whether $u$ can be therefore considered safe to use during a betting game. To this aim, we use the Introduction of Negative Trust, where we use the (exact) Binomial $95\%$ confidence interval to set our threshold:

\begin{prooftree}
        \AxiomC{$\Gamma \vdash x:5_{1/6}$}
        \AxiomC{$\Delta\vdash u_{10} : 5_{1/2}$}
        \AxiomC{$1/6 \notin [0.19, 0.81]$}
        \RightLabel{IUT}
        \TrinaryInfC{$\Gamma, \Delta \vdash UTrust(u_{10}:5_{1/2})$}
\end{prooftree}
that is, we do not trust process $u$ to be extracted from a distribution which sufficiently approximates $\Gamma$.

Obviously, things may change as new evidence is collected. Let suppose we are provided with some more experimental evidence about process $u$, we then can use rule ``update'':
        \begin{scprooftree}{0.90}
                \AxiomC{$\Delta \vdash u_{10} : 5_{1/2}$}
                \AxiomC{$\Delta \vdash u_{20} : 5_{3/20}$}
                \RightLabel{upd}
                \BinaryInfC{$\Delta\vdash u_{30} : 5_{4/15}$}

                \AxiomC{$\Gamma\vdash x:5_{1/6}$}
                %
                \AxiomC{$1/6 \in [0.12, 0.46]$}
                \RightLabel{IT}
                \TrinaryInfC{$\Gamma, \Delta \vdash \Trust(u_{30}:5_{4/15} ) $}
        \end{scprooftree}
i.e., based on a total of $30$ runs of process $u$, now we trust it to be extracted from the transparent and fair distribution $\Gamma$. 
\end{example}


\begin{example}
Suppose we are given a commercial, closed-source software to automatically shortlist CVs according to a set of criteria. To this aim, we consider the output of the classification algorithm to fall into one of the following four categories: (1) male, shortlisted, (2) male, not shortlisted, (3) female, shortlisted, (4) female, not shortlisted. To verify that the underlying algorithm does not have an inherent gender bias, we test whether we trust the frequency of any output of interest over these four classes to reflect the actual gender distribution in the current Italian population, as encoded in a distribution $\Gamma$. A proof that this is indeed the case would produce a certificate in the form of a natural deduction proof, that would go as follows:

\begin{scprooftree}{0.90}
    \AxiomC{$\{\} :: \distribution$}
    \RightLabel{extend}
    \UnaryInfC{$\{ x : \textit{female}_{0.52} \} :: \distribution$}
    \AxiomC{$\{\} \vdash c_{10}:\textit{female}_{0.3}$}
    \AxiomC{$ 0.52 \in [0.0667,0.6525] $}
    \RightLabel{IT}
    \TrinaryInfC{$\{ x : \textit{female}_{0.52} \}  \vdash \Trust(c_{10} : \textit{female}_{0.3})$}
\end{scprooftree}
\bigskip
\noindent where we assumed a $95\%$ confidence interval. Note that a shortlist selection which would provide strictly less than $2/10$ or more than $9/10$ of female candidates would be considered untrustworthy according to this model. Note that if we sampled $30$ CVs instead, shortlisting $7$ female CVs, we would derive untrustworthiness:

\begin{scprooftree}{0.90}
    \AxiomC{$\{\} :: \distribution$}
    \RightLabel{extend}
    \UnaryInfC{$\{ x : \textit{female}_{0.52} \} :: \distribution$}
    \AxiomC{$\{\} \vdash c_{30}:\textit{female}_{7/30}$}
    \AxiomC{$ 0.52 \notin [0.1,0.42] $}
    \RightLabel{IUT}
    \TrinaryInfC{$\{ x : \textit{female}_{0.52} \}  \vdash UTrust(c_{30} : \textit{female}_{7/10} )$}
\end{scprooftree}
We would in fact consider untrustworthy any number of shortlisted CVs by female candidates strictly smaller than $10$ or bigger than $21$. Assume now a different population of interest:
{\small
\begin{scprooftree}{0.90}
    \AxiomC{$\{\} :: \distribution$}
    \RightLabel{extend}
    \UnaryInfC{$\{ x : \textit{female}_{0.7} \} :: \distribution$}
    \AxiomC{$\{\} \vdash c_{10}:\textit{female}_{3/10}$}
    \AxiomC{$ 0.7 \notin [0.0667,0.6525] $}
    \RightLabel{IUT}
    \TrinaryInfC{$\{ x : \textit{female}_{0.7} \}  \vdash UTrust(c_{10} : \textit{female}_{3/10} )$}
\end{scprooftree}
}
\end{example}

\subsection{Structural Rules}\label{subsec:structural}
    
Structural rules in a proof-theoretic setting are used to determine valid operations on contexts. Under our interpretation, such rules are needed to establish coherent manipulation of assumptions on probability distributions under which either other random variables are assigned a value, or processes (programs, experiments) are sampled. In particular, our structural rules are essential to two aims:

\begin{enumerate}
    \item define the admissible extensions of value assignments to random variables through the ``extend'' type definition;
\item define the extraction of plausible values of relevant random variables from experiments performed under multiple hypotheses and, more generally, under unknown distributions.
\end{enumerate}
 We provide constrained versions of the standard rules holding for our system in Figure \ref{fig:struct}.

\begin{figure}
 \begin{prooftree}
 	\AxiomC{$\Gamma 
  \vdash 
 	z:\beta_{b}$}
  	\AxiomC{$\Delta\vdash y:\gamma_{g}$}
  	\AxiomC{$\Gamma \indep \Delta$}
   	\RightLabel{Weakening}
  	\TrinaryInfC{$\Gamma, \Delta
   \vdash z:\beta_{b}$}
  \end{prooftree}

 
 \begin{prooftree}
	\AxiomC{$\Gamma,x_{t}:\alpha_{a_{1}},\dots, x_{t}:\alpha_{a_{n}}\vdash t_{n}:\beta_{f}$}
   	\RightLabel{Contraction}
  	\UnaryInfC{$\Gamma,x_{t}:\alpha_{fun(a_{1},\dots, a_{n})} \vdash t_{n}:\beta_{f}$}
 \end{prooftree}
     \caption{Structural Rules}
         \label{fig:struct}
     \end{figure}

The Weakening rule expresses the following principle: given the probability of a random variable $z$ to have value $\beta$ is $b$, provided the value assigned to some variable $x$ is $\alpha$; if the latter assignment is independent of a distribution $\Delta$, then the starting conditional probability can be extended by additional assumptions $\Delta$ and the probability $b$ is left unaltered. In other words: a given probability distribution can be extended at will without inferring new probabilities, as long as no new dependencies are introduced. To show this case, and illustrate how Weakening by a confounding variable modifies the inferred output, we illustrate a number of examples.

\begin{example}[Confounding]\label{ex:confounding}
Consider the following graph:

\begin{center}
    \begin{tikzpicture}
        \node[state] at (1,2) (a) {$A$};
        \node[state] (b) {$B$};
        \node[state, right of=b] (c) {$C$};
        \draw (a) edge (b);
        \draw (a) edge (c);
    \end{tikzpicture}
\end{center}

This graphical model means that $P(A,B,C) = P(B \mid A) P(C\mid A) P(A)$, i.e., $A$ is a \textit{confounding} variable as $B$ and $C$ depend on $A$ but do not directly influence each other. In this example, $A$, $B$ and $C$ are probabilistic routines with outputs $0$ and $1$. $A$ is a ``master'' process in that it can decide to force ``slave'' processess $B$ and $C$ to output specific values. $A$ has a $50\%$ chance of generating output $1$ and $50\%$ chance of producing output $0$; if $A$ outputs $0$, it forces $B$ and $C$ to output $0$; if $A$ outputs $1$, $B$ and $C$ both have $50\%$ probability of producing $1$ and $50\%$ probability of producing $0$. In this case, $A$ acts as a \textit{confounder}, in the sense that one may observe a dependency between $B$ and $C$ that, however, is not due to an explicit communication between the two processes.

In TPTND this false dependency is discarded, as the dependency from $A$ is clearly stated, while $B,C$ are shown to be coordinated processes by logical conjunction under the same assumption. The entire graph is expressed analytically by the following trees (where the shared context trivially satisfies our definition of $\indep$ since we can derive $\{ x_{A} : 0,  x_{A} : 1\} :: \distribution$):
\begin{scprooftree}{0.90}
    \AxiomC{$ x_{A} : 0\vdash y_{B}:0_{\tilde{1}}$}
    \AxiomC{$ x_{A} : 0\vdash z_{C}:0_{\tilde{1}}$}
    \RightLabel{I$\times$}
\BinaryInfC{$ x_{A} : 0\vdash \langle y_{B},z_{C}\rangle:(0\times 0)_{\tilde{1}}$}
    \RightLabel{I$\rightarrow$}
\UnaryInfC{$ \vdash [x_{A}]\langle y_{B},z_{C}\rangle:(0\rightarrow (0\times 0))_{\tilde{1}}$}
\end{scprooftree}

\begin{scprooftree}{0.90}
\AxiomC{$ x_{A} : 1\vdash y_{B}:1_{\tilde{0.5}}$}
\AxiomC{$ x_{A} : 1\vdash z_C:1_{\tilde{0.5}}$}
  \RightLabel{I$\times$}
\BinaryInfC{$ x_{A} : 1\vdash \langle y_{B},z_{C}\rangle:(1\times 1)_{\tilde{0.25}}$}
    \RightLabel{I$\rightarrow$}
\UnaryInfC{$ \vdash [x_{A}]\langle y_{B},z_{C}\rangle:(1\rightarrow (1\times 1))_{\tilde{0.25}}$}
\end{scprooftree}

\end{example}
\bigskip

\begin{example}[Chain]
Consider the following different graph:

\begin{center}
    \begin{tikzpicture}
        \node[state] at (1,2) (a) {$B$};
        \node[state, left of=a] (b) {$A$};
        \node[state] (c) {$C$};
        \draw (b) edge (a);
        \draw (a) edge (c);
    \end{tikzpicture}
\end{center}

This graphical model means that $P(A,B,C) = P(C \mid B ) P(B\mid A) P(A)$, i.e., $C$ is a variable depending on $B$, whose effect in turn depends on $A$. In this example, $A$, $B$ and $C$ are probabilistic routines with outputs $0$ and $1$. $A$ is a ``master'' process in that it can decide to force ``slave'' process $B$ and this in turn process $C$ to output specific values. $A$ has a $50\%$ chance of generating output $1$ and $50\%$ chance of producing output $0$; if $A$ outputs $0$, it forces $B$ to output $0$; under the $50\%$ probability that $B$ outputs $0$, $C$ has $50\%$ probability of producing $1$ and $50\%$ probability of producing $0$. 

{\small 
\begin{scprooftree}{0.90}

 \AxiomC{$y_{B}.[x_{A} : 0]:0_{a}\vdash z_{C}:1_{\tilde{0.5}}$}
\RightLabel{I$\rightarrow$}
 \UnaryInfC{$ \vdash [y_{B}.[x_{A} : 0]]z_{C}:(0\rightarrow 1)_{[a]\tilde{0.5}}$}
 \AxiomC{$ x_{A} : 0\vdash y_{B}:0_{\tilde{1}}$}
    \RightLabel{I$\rightarrow$}
\UnaryInfC{$ \vdash [x_{A}]y_{B}:(0\rightarrow 0)_{\tilde{1}}$}
\AxiomC{$x_{A} : 0\vdash x_{A} : 0_{\tilde{a}}$}
    \RightLabel{E$\rightarrow$}
\BinaryInfC{$x_{A} : 0\vdash y_{B}.[x_{A} : 0]:0_{\tilde{a}\cdot \tilde{1}}$}
 \RightLabel{Weakening}
 \BinaryInfC{$ x_{A} : 0\vdash [y_{B}.[x_{A} : 0]]z_{C}:(0\rightarrow 1)_{[a]\tilde{0.5}}$}
\end{scprooftree}
}
Note that an evaluation of the probability of the variable $z_{C}$ having value $1$ depends ultimately on the value which will be assigned to variable $a$ in the E$\rightarrow$ by the premise $x_{A}:0\vdash x_{A}:0_{\tilde{a}}$. Note moreover how the condition of $C$ is expressed by a probability $P(B\mid A)$ and this corresponds in the tree above to the fact that the context of $z_{C}$ is of the form $y_{B}.[x_{A} : 0]:0_{a}$ with probability $a$.

\end{example}

The previous example is illustrative of a variable $y_{B}$ which is not independent of $x_{A}$, hence the latter could not be used to weaken the context of $z_{C}$. On the other hand, the following example introduces the situation of a variable $y_{C}$ dependent on $x_{A}$, and for which one could also state the dependency from a distinct variable $x_{B}$:

\begin{example}[Collider]
Consider now the following example:

\begin{center}
\begin{tikzpicture}
 \node[state] (a) {A};
 \node[state, right of=a] (b) {B};
 \node[state, below of=b] (c) {C};

 \draw (a) edge (c); 
 \draw (b) edge (c);
\end{tikzpicture}
\end{center}

In this example, the output of variable $C$ is a collider of $A$ and $B$. Again, $A$, $B$ and $C$ are probabilistic routines with outputs $0$ and $1$. $A$ is a ``master'' process in that it can decide to force ``slave'' process $C$: $A$ has a $50\%$ chance of generating output $1$ and $50\%$ chance of producing output $0$:
 \begin{prooftree}
   \AxiomC{$x_{A}:0_{0.5}, x_{A}:1_{0.5} \vdash x_{A}:0_{0.5}$}
     \AxiomC{$x_{A}:0_{0.5}, x_{A}:1_{0.5}\vdash x_{A}:1_{0.5}$}
    \RightLabel{I$+$}
    \BinaryInfC{$x_{A}:0_{0.5}, x_{A}:1_{0.5}\vdash x_{A}:(0+1)_{0.5+0.5}$}
  \end{prooftree}if $A$ outputs $0$, it forces $C$ to output $0$, while if $A$ outputs $1$, $C$ will have only $50\%$ chance to output $1$:
{
{\tiny\begin{scprooftree}{0.80}
 \AxiomC{$x_{A}:0_{0.5} \vdash y_{C}:0_{\widetilde{1}}$}
\RightLabel{I$\rightarrow$}
\UnaryInfC{$\vdash [x_{A}]y_{C}:0 \rightarrow 0_{[0.5]\widetilde{1}}$}
\AxiomC{$x_{A}:0_{0.5} \vdash x_{A}:0_{0.5}$}
\RightLabel{E$\rightarrow$}
\BinaryInfC{$x_{A}:0_{0.5}\vdash [x_{A}]y_{C}:0_{\widetilde{0.5}}$}
\AxiomC{$x_{A}:0_{0.5}\vdash y_{C}:1_{\widetilde{0}}$}
\RightLabel{I$\rightarrow$}
\UnaryInfC{$\vdash [x_{A}]y_{C}:(1\rightarrow 1 ) _{[0.5]\widetilde{0}}$}
\AxiomC{$x_{A}:0_{0.5} \vdash x_{A}:0_{0.5}$}
\RightLabel{E$\rightarrow$}
\BinaryInfC{$x_{A}:0_{0.5}\vdash [x_{A}:0_{0.5}]y_{C}:1_{\widetilde{0}}$}
\RightLabel{I$+$}
\BinaryInfC{$x_{A}:0_{0.5}\vdash [x_{A}:0_{0.5}]y_{C}:(0+1)_{\widetilde{0.5}}$}
\end{scprooftree}}
}
{
{\tiny\begin{scprooftree}{0.80}
\AxiomC{$x_{A}:1_{0.5} \vdash y_{C}:1_{\widetilde{0.5}}$}
\RightLabel{I$\rightarrow$}
\UnaryInfC{$\vdash [x_{A}]y_{C}:(1\rightarrow 1)_{[0.5]\widetilde{0.5}}$}
\AxiomC{$x_{A}:1_{0.5} \vdash x_{A}:1_{0.5}$}
\RightLabel{E$\rightarrow$}
\BinaryInfC{$x_{A}:1_{0.5}\vdash y_{C}.[x_{A}:1_{0.5}]:1_{\widetilde{0.25}}$}
\AxiomC{$x_{A}:1_{0.5} \vdash y_{C}:0_{\widetilde{0.5}}$}
\RightLabel{I$\rightarrow$}
\UnaryInfC{$
\vdash [x_{A}]y_{C}:(1\rightarrow 0)_{[0.5]\widetilde{0.5}}$}
\AxiomC{$x_{A}:1_{0.5} \vdash x_{A}:1_{0.5}$}
\RightLabel{E$\rightarrow$}
\BinaryInfC{$x_{A}:1_{0.5}\vdash y_{C}.[x_{A}:1_{0.5}]:0_{\widetilde{0.25}}$}
\RightLabel{I$+$}
\BinaryInfC{$x_{A}:1_{0.5}\vdash y_{C}.[x_{A}:1]:(0+1)_{\widetilde{0.5}}$}
\end{scprooftree}}
}
\bigskip


Note that here the total probability of $C$ outputting $(0+1)$ is $1$, but this is expressed under different conditions, namely of process $A$ outputting either $0$ or $1$. While we cannot derive this, by inspection of the two derivation trees, we can select the probability of $\vdash y_{C}.[x_{A}:0]:0_{0.5}$ in the leftmost branch of the first tree and the probability of $\vdash y_{C}.[x_{A}:1]:0_{0.25}$ and compute the overall probability of $C$ having output $0$ when $A$ has value $(0+1)$ is $(0.5+0.25)=0.75$. Similarly, the probabilities of $\vdash y_{C}.[x_{A}:0]:1_{0}$ and $\vdash y_{C}.[x_{A}:1]:1_{0.25}$ gives us the overall probability of output $1$ from $C$ as $(0+0.25)=0.25$. Note, however, that this computation is not derivable.

Now consider that $B$ is also a ``master'' process in that it can decide to force ``slave'' process $C$: $B$ has a $50\%$ chance of generating output $1$ and $50\%$ chance of producing output $0$:
\begin{prooftree}
   \AxiomC{$x_{B}:0_{0.5} \vdash x_{B}:0_{0.5}$}
   \AxiomC{$x_{B}:1_{0.5}\vdash x_{B}:1_{0.5}$}
   \RightLabel{I$+$} 
   \BinaryInfC{$x_{B}:0_{0.5}, x_{B}:1_{0.5}\vdash x_{B}:(0+1)_{0.5+0.5}$}
  \end{prooftree}

if $B$ outputs $1$, it forces $C$ to output $0$, while if $B$ outputs $0$, $C$ will have only $50\%$ chance to output $1$:
{
{\tiny\begin{scprooftree}{0.80}
 \AxiomC{$x_{B}:1_{0.5} \vdash y_{C}:0_{\widetilde{1}}$}
\RightLabel{I$\rightarrow$}
\UnaryInfC{$\vdash [x_{B}:1_{0.5}]y_{C}:(1\rightarrow 0)_{[0.5]\widetilde{1}}$}
\AxiomC{$x_{B}:1_{0.5} \vdash x_{B}:1_{0.5}$}
\RightLabel{E$\rightarrow$}
\BinaryInfC{$x_{B}:1_{0.5}\vdash y_{C}.[x_{B}:1_{0.5}]:0_{\widetilde{0.5}}$}
\AxiomC{$x_{B}:1_{0.5} \vdash y_{C}:1_{\widetilde{0}}$}
\RightLabel{I$\rightarrow$}
\UnaryInfC{$\vdash [x_{B}:1_{0.5}]y_{C}:(1\rightarrow 1)_{[0.5]\widetilde{0}}$}
\AxiomC{$x_{B}:1_{0.5} \vdash x_{B}:1_{0.5}$}
\RightLabel{E$\rightarrow$}
\BinaryInfC{$x_{B}:1_{0.5}\vdash y_{C}.[x_{B}:1_{0.5}]:1_{\widetilde{0}}$}
\RightLabel{I$+$}
\BinaryInfC{$x_{B}:1_{0.5}\vdash y_{C}.[x_{B}:1_{0.5}]:(0+1)_{\widetilde{0.5}}$}
\end{scprooftree}}
}

{
{\tiny\begin{scprooftree}{0.80}
 \AxiomC{$x_{B}:0_{0.5} \vdash y_{C}:1_{\widetilde{0.5}}$}
\RightLabel{I$\rightarrow$}
\UnaryInfC{$\vdash [x_{B}:0_{0.5}]y_{C}:(0\rightarrow 1)_{[0.5]\widetilde{0.5}}$}
\AxiomC{$x_{B}:0_{0.5} \vdash x_{B}:0_{0.5}$}
\RightLabel{E$\rightarrow$}
\BinaryInfC{$x_{B}:0_{0.5}\vdash y_{C}.[x_{B}:0_{0.5}]:1_{\widetilde{0.25}}$}
\AxiomC{$x_{B}:0_{0.5} \vdash y_{C}:0_{\widetilde{0.5}}$}
\RightLabel{I$\rightarrow$}
\UnaryInfC{$\vdash [x_{B}:0_{0.5}]y_{C}:(0\rightarrow 0)_{[0.5]\widetilde{0.5}}$}
\AxiomC{$x_{B}:0_{0.5} \vdash x_{B}:0_{0.5}$}
\RightLabel{E$\rightarrow$}
\BinaryInfC{$x_{B}:0_{0.5}\vdash y_{C}.[x_{B}:0_{0.5}]:0_{\widetilde{0.25}}$}
\RightLabel{I$+$}
\BinaryInfC{$x_{B}:0_{0.5}\vdash y_{C}.[x_{B}:0_{0.5}]:(0+1)_{\widetilde{0.5}}$}
\end{scprooftree}}
}
\bigskip


Note that here the total probability of $C$ outputting $(0+1)$ is again $1$, but this is expressed under different conditions, namely of process $B$ outputting either $0$ or $1$. By inspection of the two derivation trees, we can select the probability of $\vdash y_{C}.[x_{B}:1]:0_{0.5}$ in the leftmost branch of the first tree and the probability of $\vdash y_{C}.[x_{B}:0]:0_{0.25}$ and compute the overall probability of output $0$ as $(0.5+0.25)=0.75$. Similarly, the probabilities of $\vdash y_{C}.[x_{B}:1]:1_{0}$ and $\vdash y_{C}.[x_{B}:0]:1_{0.25}$ gives us the overall probability of output $1$ from $C$ dependent from $B$ is $(0+0.25)=0.25$. Note, however, that this computation is not derivable.

In a similar vein, to compute the overall probability that $y_{C}:0$ given $x_{A}:1$ and $x_{B}:1$, we can inspect the rightmost branch of the second of the previous two trees and the leftmost branch of the first of the two trees above: $(0.25+0.5)-(0.25\cdot0.5)=0.625$. Or the overall probability that $y_{C}:0$ given $x_{A}:0$ and $x_{B}:0$, inspecting the rightmost branch of the second of the previous two trees and the leftmost branch of the first of the two trees above: $(0.5+0.25)-(0.5\cdot0.25)=0.625$. Again, this computation is available by inspection of the derivation trees, but not directly derivable.

On the other hand, if we formulate $y_{C}$ under the  condition $x_{A}:(0+1)$ (respectively $x_{B}:(0+1)$), we can express the direct dependency of $C$ having output $(0+1)$ deriving it from the two disjuncts: 
\begin{scprooftree}{0.80}
\AxiomC{$x_{A}:(0+1)_{1} \vdash y_{C}:0_{0.75}$}
 \AxiomC{$x_{A}:(0+1)_{1} \vdash y_{C}:1_{0.25}$}
\RightLabel{I$+$}
\BinaryInfC{$x_{A}:(0+1)_{1}\vdash y_{C}:(0+1)_{1}$}
\end{scprooftree}

\begin{scprooftree}{0.80}
     \AxiomC{$x_{B}:(0+1)_{1} \vdash y_{C}:0_{0.75}$}
 \AxiomC{$x_{B}:(0+1)_{1} \vdash y_{C}:1_{0.25}$}
\RightLabel{I$+$}
\BinaryInfC{$x_{B}:(0+1)_{1}\vdash y_{C}:(0+1)_{1}$}
\end{scprooftree}

\begin{scprooftree}{0.80}
    \AxiomC{$x_{A}:(0+1)_{1}, x_{B}:(0+1)_{1}\vdash y_{C}:0_{0.75}$}
    \AxiomC{$x_{A}:(0+1)_{1}, x_{B}:(0+1)_{1}\vdash y_{C}:1_{0.25}$}
\RightLabel{I$+$}
\BinaryInfC{$x_{A}:(0+1)_{1}, x_{B}:(0+1)_{1}\vdash y_{C}:(0+1)_{1}$}
\end{scprooftree}

\bigskip

\end{example}

The Contraction rule expresses a form of plausibility test on hypotheses. Consider a probability distribution $\Gamma$ which includes more than one distinct hypothesis on the theoretical probability $a$ of some process with output value $\alpha$, on which the evaluation of an experiment $t$ with output $\beta$ with frequency $f$ depends. The general case will be that $\Gamma$  is unknown, so that for any output variable $\alpha\in \Gamma$, the entire set of possible probability values $[0,1]$ is assigned. A contraction on such distribution is a function $fun[0,1]$ which extracts one value $x_{t}:\alpha_{a}$. The function $fun$ can be chosen at will; a sensible example is the Maximum Likelihood function $ML(a,a')$ which returns the value making the frequency $f$ the most plausible as follows:

\begin{prooftree}
	\AxiomC{$\Gamma,x_{t}:\alpha_{a},x_{t}:\alpha_{a'}\vdash t_{n}:\alpha_{f}$}
   	\RightLabel{Contraction}
  	\UnaryInfC{$\Gamma,x_{t}:\alpha_{ML(a,a')} \vdash t_{n}:\alpha_{f}$}
 \end{prooftree}
where $ML(a,a') = \text{argmax}_{x\in\{a,a'\}} ( x^{\frac{f}{n}}(1-x)^{1-\frac{f}{n}} )$.

\begin{example}[Contraction] Let $\Delta$ be the unknown distribution associated to a die $d$ with possible outputs $\{1, \dots, 6\}$, i.e.:
\begin{prooftree}
\AxiomC{$1::\Output$, $\dots$, $6::\Output$}
\RightLabel{unknown}
\UnaryInfC{$\Delta :: \distribution$}
\end{prooftree}
Technically, $\Delta$ could be re-written as the distribution which associates to each output $\{1, \dots, 6\}$ the value $[0,1]$. Consider now $10$ throws of the die with sequence of outputs $1$, $6$, $6$, $2$, $6$, $3$, $4$, $6$, $1$, $1$. Then,
\begin{prooftree}
    \AxiomC{}
    \RightLabel{sampling}
    \UnaryInfC{$\Delta \vdash d_{10} : 6_{4/10} $}
\end{prooftree}

Applying contractions multiple times can be abbreviated as follows:
\begin{prooftree}
    \AxiomC{$\Delta \vdash d_{10} : 6_{4/10} $}
    \RightLabel{Contraction}
    \UnaryInfC{$\Delta\setminus \{ x:6_a \mid a\notin[0,0.4)\cup(0.4,1] \} \vdash d_{10} : 6_{4/10} $}
\end{prooftree}
\bigskip
where the context in the conclusion denotes the set obtained by the unknown distribution in the premise, removing from it the declaration which assigns any probability different than $0.4$ to $x$ having output $6$, i.e., $\Delta\setminus \{ x:6_a \mid a\in[0,0.4)\cup(0.4,1] \} = \{ x:1_{[0,1]}, x:2_{[0,1]}, x:3_{[0,1]}, x:4_{[0,1]}, x:5_{[0,1]}, x:6_{0.4} \}$.

\end{example}
\bigskip

\begin{example}
Consider the distribution $\Gamma$ representing a fair coin, i.e., with $50\%$ probability of landing Heads when tossed. Assume $\Delta$ expresses a coin which we know nothing about, i.e $\Delta$ is unknown, which when tossed 10 times lands Heads only once. Under the confidence range expressed by $[0.0025,0.4450]$, the coin is untrustworthy.

\begin{prooftree}
        \AxiomC{$\Gamma, x:H_{0.5}::distribution$}
        \AxiomC{$\Delta\vdash c_{10} : H_{1/10}$}
        \AxiomC{$0.5 \notin [0.0025,0.4450]$}
        \RightLabel{IUT}
        \TrinaryInfC{$\Gamma, x:H_{0.5}, \Delta \vdash \UTrust(c_{10}:H_{1/10})$}
    
\end{prooftree}

We can now decide to assume the coin $c$ to be biased with a probability of only $0.3$ of landing Heads when tossed: this can be done by applying an appropriate instance of contraction. Under this new assumption, the result of 10 tosses of the coin landing Heads only once is trustworthy.

\begin{prooftree}
\AxiomC{$x_{c}:H_{[0,1]}, \dots,  x_{c}:H_{[0,1]}\vdash c_{10}:H_{1/10}$}
\RightLabel{contraction}
\UnaryInfC{$x_{c}:H_{0.3}\vdash c_{10}:H_{1/10}$}
\AxiomC{$0.3 \in [0.0025,0.4450]$}
        \RightLabel{IT}
        \BinaryInfC{$x:H_{0.3}\vdash \Trust(c_{10}:H_{1/10})$}
\end{prooftree}    
\end{example}
\bigskip

Finally, the Cut rule is admissible as a detour of I$\rightarrow$ and E$\rightarrow$ rules when both the major and the minor premises result from sampling:

 \begin{prooftree}
          \AxiomC{$ \Delta, x : \alpha_{a} \vdash t_n : \beta_{\widetilde{b}} $}
          \RightLabel{I$\rightarrow$}
          \UnaryInfC{$ \Delta \vdash [x]t_n : (\alpha \rightarrow \beta)_{[a]\widetilde{b}} $}
      \AxiomC{$\Gamma \vdash u_n:\alpha_{\widetilde{a}} $}
          \RightLabel{E$\rightarrow$}
          \BinaryInfC{$ \Gamma,\Delta \vdash t_{n}.[u:\alpha]:\beta_{\widetilde{a\cdot b}} $}
     \end{prooftree}

\section{Metatheory}\label{sec:meta}

Our meta-theoretical analysis aims at showing properties of computational terms with respect to expected behavior, as well as considering inference trees of TPTND to establish how logical rules behave with respect to inferring trustworthiness. Hence, a judgement $\Gamma\vdash t_{n}:\alpha_{f}$ is considered in the light of the computational transformations formalized by the relation $\mapsto_{p}$ for an expression of the form $t_{n}:\alpha_{f}$.


Our first step is to reduce expected probabilities when they depend on deterministic outputs. Deterministic substitution has the following meaning: consider a process $t$ which has output $\beta$ with expected probability $\tilde{b}$, under the assumption of a variable having value of type $\alpha$ with probability $a$; whenever output $\alpha$ is deterministically obtained by a process $u$, process $t$ has output $\beta$ with probability $\tilde{b}$ under the condition that $u$ is executed and $\alpha$ is obtained.

\begin{theorem}[Deterministic Substitution]\label{theorem:determsub}
If $x\!:\!\alpha_{a} \vdash t:\beta_{\tilde{b}}$ and $x_{u}:\alpha_{a} \vdash u\!:\!\alpha$, 
{
then there exists a term
$s=t.\mathcal{C}_1[u].\, \dots\,.\mathcal{C}_n[u]$ where $1\leq n $ and each $\mathcal{C}_i[u]$ is obtained by zero or more logical eliminations on $u$, and $\vdash s :\beta_{\tilde{b}}$.} 

\end{theorem}

\begin{proof}
By structural induction on the antecedent of the first premise. To show the reduction to output $\beta_{\tilde{b}}$, for each case of $x\!:\!\alpha_{a}$ in the induction we first build one or more dependent terms of the form $[x]t:(\alpha ' \rightarrow \beta)_{[a]\tilde{b}}$, then consider the deterministic output $u\!:\!\alpha$ in the second derivation and use it to infer the application
$t.\mathcal{C}_1[u].\, \dots\,.\mathcal{C}_n[u]$:



\begin{itemize}
    \item For $\alpha$ atomic, the case is immediate;
    
    \item For $\alpha\equiv \gamma\times\delta$, there are terms $\langle v,z\rangle$ obtained from independent distributions such that the following tree is a reduction to the base case:
  \begin{scprooftree}{0.75}
  \AxiomC{$x_v\!:\!\gamma , y_z\!:\!\delta \vdash t:\beta_{\tilde{b}}$}
\RightLabel{I$\rightarrow$}
\UnaryInfC{$ y_z\!:\!\delta \vdash [x]t:(\gamma \rightarrow \beta)_{[1]\tilde{b}}$}
          \AxiomC{$x_v\!:\!\gamma , y_z\!:\!\delta\vdash \langle v,z\rangle:(\gamma\times\delta)$}
          \RightLabel{E$\times$}
          \UnaryInfC{$x_v\!:\!\gamma , y_z\!:\!\delta \vdash fst(\langle v,z\rangle):\gamma$}
          \RightLabel{E$\rightarrow$}
          \BinaryInfC{$ x_v\!:\!\gamma, y_{z}\!:\!\delta\vdash t.[fst(\langle v,z\rangle):\gamma]:\beta_{\tilde{b}}$}
         \RightLabel{I$\rightarrow$}
         \UnaryInfC{$ x_v\!:\!\gamma , \vdash [y]t.[fst(\langle v,z\rangle):\gamma]:(\delta\rightarrow \beta)_{[1]\tilde{b}}$}
         
         \AxiomC{$x_{v}\!:\!\gamma, y_{z}\!:\!\delta\vdash \langle v,z\rangle:(\gamma\times\delta)$}
         \RightLabel{E$\times$}
         \UnaryInfC{$x_{v}\!:\!\gamma, y_{z}\!:\!\delta \vdash snd(\langle v,z\rangle):\delta$}
         \RightLabel{E$\rightarrow$}
         \BinaryInfC{$x_v\!:\!\gamma, y_z\!:\!\delta\vdash t.[fst(\langle v,z\rangle):\gamma].[snd(\langle v,z\rangle):\delta]: \beta_{1\cdot\tilde{b}}$}       
  \end{scprooftree}

    
     \item For $\alpha\equiv \gamma + \delta$ is a metavariable for the distribution of possible exclusive outputs of a given term $u$. Then either $u:\gamma$ or $u:\delta$ (and respectively the other term has probability $=0$); then use
     an additional assumption of type $\mathtt{O}^{\bot}$ to select either one through $E+$, and the base case applies. 

    \item For $\alpha\equiv \gamma\rightarrow \gamma$ (as an instance of the rule $update$ with unary executions and deterministic outputs), there are terms $u\equiv [y]v$ and $v$ such that the following tree is a reduction to the base case: 
\[
\infer[\mathrm{E}\rightarrow]{
y_v:\gamma\vdash t.[ v.[v:\gamma]:\gamma]:\beta_{1\cdot\tilde{b}}
}{
\infer[\mathrm{I}\rightarrow]{\vdash [y]t:( \gamma\rightarrow \beta)_{\tilde{b}}}{y_{v}:\gamma\vdash t:\beta_{\tilde{b}}}
&
\infer[\mathrm{E}\rightarrow]{y_v:\gamma\vdash v.[v:\gamma]:\gamma}{
\infer[\mathrm{I}\rightarrow]{\vdash [y]v:\gamma \rightarrow \gamma }{ 
y_{v}:\gamma\vdash v:\gamma 
} & y_{v}:\gamma\vdash v:\gamma }}\]
\end{itemize}
\end{proof}

We consider the computational semantics of samplings  to prove properties about frequencies and expected probabilities for such terms.  
Recall that a computational term $t$ is said to probabilistically reduce to a term $t'$ if $t$ can be transformed into $t'$ according to one of the rewriting rules from Figures \ref{fig:eval},\ref{fig:eval2}.  First, we show that for atomic terms obtained by applications of reduction rules in Figure \ref{fig:eval}, the value of the frequency of any given output approximates, in the limit, its expected probability.

\begin{theorem}
\label{thm:atomic-convergence} 
For any atomic term $t$ such that $\Gamma , x_{t}:\alpha_{a} \vdash t_n:\alpha _{\tilde{a}}$, and sequence of reductions 
\[\mathcal{L}_1, t_{n_0}:\alpha _{f_0}\mapsto \dots \mapsto \mathcal{L}_m, t_{n_m}:\alpha _{f_m}\mapsto \dots \]with $n_0 <\dots <n_m<\dots $ that only contains event, sampling and update rule applications, let us define $\nu$ as the function such that $\nu_t(m)=f_m$. We have then that
\[
\lim _{m\rightarrow \infty} \vert  a-\nu_t( m)\vert=0
\]with probability $1$.
\end{theorem}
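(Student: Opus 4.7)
The plan is to reduce the claim to the classical Strong Law of Large Numbers (SLLN). The key observation is that the computational semantics for atomic terms is engineered precisely so that a sequence of applications of event, sampling, and update rules generates a frequency term whose value $f_m$ is exactly the empirical mean of $n_m$ independent Bernoulli$(a)$ indicators, namely the indicator of the output being $\alpha$ on each individual run of $t$.

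First I would unfold the definition of $f_m$ in terms of raw event rule outputs. Each application of the event rule $\textrm{event}^{\mapsto t}$ independently appends one typed term $t:\alpha^i$ to the list with probability $a_i$, so in particular the $k$-th appended term equals $t:\alpha$ with probability $a$ and equals $t:\alpha^j$ for some $j$ with $\alpha^j\neq \alpha$ with probability $1-a$. Let $X_k$ denote the indicator random variable of this event; the $X_k$ are i.i.d.\ Bernoulli$(a)$ by the stipulated stochastic independence of distinct event rule applications on atomic terms. Next I would argue, by induction on the length of the reduction prefix leading to $t_{n_m}:\alpha_{f_m}$ and a routine inspection of the sampling and update rules, that
\[
f_m \;=\; \frac{1}{n_m}\sum_{k=1}^{n_m} X_k.
\]
For the sampling rule this is immediate from its definition $f=|\{i\mid \alpha_i=\alpha\}|/n$; for the update rule the weighted average $f\cdot(n/(n+m))+g\cdot(m/(n+m))$ is precisely the sample mean over the concatenation of the two underlying i.i.d.\ blocks, so the formula is preserved.

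Once the equality $\nu_t(m)=\frac{1}{n_m}\sum_{k=1}^{n_m} X_k$ is established, since $n_0<n_1<\dots$ forces $n_m\to\infty$, the Strong Law of Large Numbers applied to the i.i.d.\ sequence $(X_k)_{k\geq 1}$ yields
\[
\lim_{m\to\infty}\nu_t(m)\;=\;\mathbb{E}[X_1]\;=\;a \quad\text{with probability } 1,
\]
which is exactly the claim $\lim_{m\to\infty}|a-\nu_t(m)|=0$ almost surely.

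The main obstacle I expect is the bookkeeping in the inductive step: one must be careful that the typed term $t_{n_m}:\alpha_{f_m}$ produced by any interleaving of sampling and update applications really does aggregate a disjoint collection of $n_m$ independent event applications, with no double counting and no dependence introduced by reusing a term in two branches. This is conceptually straightforward given that atomic event rule applications are stipulated to be stochastically independent and that update/sampling rules consume their inputs from the list $\mathcal{L}$, but writing it cleanly requires naming the underlying pool of trials and tracking how the list $\mathcal{L}_m$ partitions into the multiset of event outputs contributing to $t_{n_m}:\alpha_{f_m}$. Once this accounting is in place, the probabilistic content is exactly the classical SLLN and nothing more.
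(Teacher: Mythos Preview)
Your proposal is correct and follows essentially the same route as the paper: both reduce the claim to the Strong Law of Large Numbers for i.i.d.\ Bernoulli$(a)$ indicators arising from the $\textrm{event}^{\mapsto t}$ rule. If anything, your version is more explicit than the paper's in verifying (by inspection of the sampling and update rules) that $f_m$ is exactly the empirical mean $\frac{1}{n_m}\sum_{k=1}^{n_m}X_k$, a point the paper treats rather informally.
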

\begin{proof}
   {Let us define a random variable $X$ with possible values $0$ and $1$ and expected value $E(X)=a$. A sequence of random variables $X_1, \dots , X_n$ then encodes the count of how many times an output of type $\alpha$ is obtained during an experiment consisting of $n$ executions of the term $t$. The expected value $E(X)=a$ indeed exactly corresponds to the probability that an application of the $\textrm{event}^{\mapsto t}$ rule produces a typed term of the form $t:\alpha$ indicating that $t$ yielded an output of type $\alpha$. This is due to the fact that, since $x_{t}:\alpha_{a} $ is in our context and our context is meant to formalize the behaviour of the process $t$, the $\textrm{event}^{\mapsto t}$ rule produces a typed term $t:\alpha $ with probability $a$. 

Notice that a sequence $X_1, \ldots , X_n$ such that $\frac{X_1, \ldots , X_n}{n}=E(X)=a$ encodes a count of the number of outcomes of type $\alpha$ during an experiment consisting of $n$ executions of $t$ such that exactly $a$ of the total outcomes was of type $\alpha$. 

In case all random variables $X_1, \ldots , X_n$ have the same expected value $E(X)$, the strong law of large numbers states that 
\[\lim _{n \rightarrow \infty} \left(\frac{X_1 + \ldots + X_n}{n}   \right)=E(X) \]with probability $1$. This, hence, holds for any sequence $X_1, \ldots , X_n$ of occurrences of the random variable $X$ defined above.

Since \begin{itemize}
\item each sequence $X_1, \ldots , X_n$ such that $\frac{X_1, \ldots , X_n}{n}=E(X)=a$ corresponds to an experiment consisting of $n$ executions of $t$ such that exactly $a$ of the total outcomes was of type $\alpha$, 
\item the closer $\frac{X_1, \ldots , X_n}{n}$ gets to $E(X)=a$, the closer the frequency of outcomes of type $\alpha$ produced by the $\textrm{event}^{\mapsto t}$ rules applications gets to the expected probability $a$, 
\item the value $f_m$ of typed atomic terms of the form $t_m:\alpha_{f_m}$ produced by term evaluation rules directly and exclusively depends on the frequency with which a term $t:\alpha$ is obtained by the $\textrm{event}^{\mapsto t}$ rule; 
\end{itemize}the strong law of large numbers implies that, with probability $1$, for $m$ approaching infinity, any reduction \[\mathcal{L}_1, t_{n_0}:\alpha _{f_0}\mapsto \dots \mapsto \mathcal{L}_m, t_{n_m}:\alpha _{f_m} \]
with $n_0 <\dots <n_m $ is such that $f_m$ is closer and closer to $a$. Therefore, by definition of $\nu$, we have
\[
\lim _{m\rightarrow \infty} \vert  a-\nu_t( m)\vert=0
\]with probability $1$.
}
\end{proof}

Before closing this result under logical operations, let us introduce some standard notation.

\begin{definition} By $\mapsto^*$, as per standard practice, we denote the reflexive and transitive closure of the $\mapsto $ relation.
\end{definition}

We can now close Theorem \ref{thm:atomic-convergence} under those logical operations which are related to the the reduction rules in Figure \ref{fig:eval2}.

\begin{theorem}\label{thm:complex-convergence}
For any term $t$ such that $\Gamma \vdash t_{n_0}:\alpha _{f_0}$ and $\Gamma \vdash t_{n_0}:\alpha _{\tilde{a}}$  
and sequence of reductions 
\[\mapsto^*\mathcal{L}_1, t_{n_0}:\alpha _{f_0}\mapsto^* \dots \mapsto^* \mathcal{L}_m, t_{n_m}:\alpha _{f_m}\mapsto^* \dots \]starting from the empty list and with $n_0 <\dots <n_m<\dots $, let us define $\nu$ as the function such that $\nu_t(m)=f_m$ (where, if $f_m$ contains sub-expressions $[a]$ related to $\rightarrow$ types, we ignore them). We have then that
\[
\lim _{m\rightarrow \infty} \vert  a-\nu_t( m)\vert=0
\]with probability $1$.
\end{theorem}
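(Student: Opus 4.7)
The plan is to prove the theorem by induction on the structure of the derivation that produces $t_{n_m}:\alpha_{f_m}$, using Theorem \ref{thm:atomic-convergence} as the base case and the continuity of the arithmetic operations appearing in the rules of Figure \ref{fig:eval2} for the inductive step. Concretely, the logical reduction rules in Figure \ref{fig:eval2} never change the sample size $n$: only sampling and update can make $n_m$ grow. Therefore any reduction sequence with $n_0<\dots <n_m<\dots$ realising $t_{n_m}:\alpha_{f_m}$ can be organised as a finite tree whose leaves are atomic subterms produced by event/sampling/update reductions and whose internal nodes are applications of I$^{\mapsto}+$, E$^{\mapsto}+$, I$^{\mapsto}\times$, E$^{\mapsto}\times$, I$^{\mapsto}\rightarrow$, or E$^{\mapsto}\rightarrow$. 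The induction will be on the height of this tree.

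For the base case, each leaf is an atomic subterm $s_{n_m}:\gamma_{f^s_m}$ generated purely by event, sampling and update rules: Theorem \ref{thm:atomic-convergence} guarantees that $\nu_s(m)\to c$ almost surely, where $c$ is the theoretical probability of $\gamma$ for $s$. For the inductive step, I would check each logical rule. Each such rule expresses the resulting frequency as one of the continuous arithmetic functions $+$, $-$, $\cdot$, $/$ applied to the frequencies of its premises: I$^{\mapsto}+$ uses $(f,g)\mapsto f+g$; E$^{\mapsto}+$ uses $(f,g)\mapsto f-g$; I$^{\mapsto}\times$ and E$^{\mapsto}\rightarrow$ use $(f,g)\mapsto f\cdot g$; E$^{\mapsto}\times$ uses $(f,g)\mapsto f/g$. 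Since almost-sure convergence is preserved by continuous mappings, once the inductive hypothesis delivers $f^{(i)}_m\to a_i$ almost surely for each premise $i$, the conclusion $\nu_t(m)\to a$ almost surely follows, with $a$ the theoretical probability attached to $\alpha$ under the corresponding introduction/elimination rule of Section~\ref{subsec:reason_processes}. The derivation tree has finitely many nodes, so the intersection of the finitely many probability-$1$ events on which each sub-convergence holds still has probability $1$.

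The main obstacle is the $\rightarrow$ rules. In I$^{\mapsto}\rightarrow$ the resulting type carries the bracketed annotation $[a]f$ recording the assumed theoretical probability of the premise $x_u:\alpha_a$; the statement explicitly tells us to ignore such sub-expressions when defining $\nu_t$, so I would first verify that this convention is coherent with how the numerical component is propagated by E$^{\mapsto}\rightarrow$, which multiplies $f$ by the frequency $g$ of the actual sample $u_n:\alpha_g$. Once this bookkeeping is in place, the step reduces to the continuous-mapping argument applied to $(f,g)\mapsto f\cdot g$. A further small subtlety is the division case E$^{\mapsto}\times$: continuity of $(f,g)\mapsto f/g$ at $(a,b)$ requires $b\neq 0$, but this is guaranteed since $g$ is the frequency of an output whose theoretical probability $b$ is strictly positive (otherwise the premise $t_n:\beta_g$ would not be meaningful), and by the inductive hypothesis $g$ converges almost surely to $b>0$, so $g>0$ eventually with probability $1$. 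Putting these observations together completes the induction and yields the stated almost-sure convergence.
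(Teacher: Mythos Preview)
Your proposal is correct and follows essentially the same route as the paper: both arguments reduce to Theorem~\ref{thm:atomic-convergence} for the atomic leaves and then propagate convergence upward by structural induction over the logical term-evaluation rules, using the arithmetic relationship each rule imposes between premise and conclusion frequencies. Your formulation is in fact slightly tighter than the paper's---you explicitly invoke preservation of almost-sure convergence under continuous maps, note the need for a nonzero divisor in the $\mathrm{E}^{\mapsto}\times$ case, and observe that only finitely many probability-$1$ events must be intersected---but the underlying decomposition and inductive scheme are the same.
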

\begin{proof}
Since the subscript of $t$ increases at the displayed steps of the reduction sequence
\[\mathcal{L}_1, t_{n_0}:\alpha _{f_0}\mapsto^* \dots \mapsto^* \mathcal{L}_m, t_{n_m}:\alpha _{f_m}\mapsto^* \dots \]
we have that the atomic terms occurring in $t_{n_0}, \dots ,t_{n_m}, \dots$ occur with increasing subscripts themselves when they occur as typed terms during our reduction. Notice that the increase in the subscript of $t$ cannot be exclusively due to I$^{\mapsto}\times$ rule applications since the type of $t$ is always the same at the displayed reduction steps (even though it might change at the non-displayed steps, during which, potentially, logical operation rules of elimination and introduction are applied). Since the atomic terms occurring in $t_{n_0}, \dots ,t_{n_m}, \dots$ have increasing subscripts themselves when they occur as typed terms in our reduction, Theorem \ref{thm:atomic-convergence} guarantees that the frequencies expressed by these atomic terms when typed tends, for the subscripts of the terms increasing to infinity, to have difference $0$ with respect to the expected probability of the types of these atomic terms. By induction on the number of term evaluation rule applications,
we show that if the frequency expressed by the atomic terms occurring in $t_{n_0}, \dots ,t_{n_m}, \dots$  when they occur as typed terms tends to have difference $0$ with respect to the expected probability of their types, then the frequencies $f_0 , \dots, f_m \dots$ expressed by the typed terms $t_{n_0}:\alpha _{f_0}, \dots , t_{n_m}:\alpha _{f_m} \dots$ tends to have difference $0$ with respect to the expected probability $a$ of $t$ yielding $\alpha$ as output.

 Let us reason by induction on the number of term evaluation rules applied.

If only one evaluation rule has been applied, it must be an $\textrm{event}^{\mapsto t}$ rule and $\alpha$ is atomic. In this case, 
our hypothesis on the atomic terms occurring in $t_{n_0}, \dots ,t_{n_m}, \dots$ directly gives us what we need to prove.

Suppose then that for any typed term occurrence obtained by less than $r$ term evaluation rule applications  corresponding to logical operations the statement holds, we prove that it holds also for any typed term occurrence obtained by $r$ term evaluation rule applications  corresponding to logical operations.

We reason on the $r\,$th term evaluation rule applied to obtain the considered term. 
\begin{itemize}

\item If the term has been obtained by I$^{\mapsto}+$, then $t_m:( \beta^0 + \beta^1)_f $ and, by inductive hypothesis, the statement holds for both occurrences $t_m:\beta^0_{g_0}$ and $t_m:\beta^1_{g_1}$, from which $t_m:( \beta^0 + \beta^1)_f $ has been produced by I$^{\mapsto}+$,  with respect to the expected probabilities $b_0$ and $b_1$ such that 
$\Gamma \vdash t_m:\beta^0 _{\tilde{b_0}}$ and  $\Gamma \vdash t_m:\beta^1_{\tilde{b_1}}$. 
Since  
$\Gamma \vdash t_m:(\beta^0+\beta^1) _{\tilde{b_0+b_1}}$
and $t_m:( \beta^0 + \beta^1)_{f}$ for $f= g_0+g_1$, we have that the statement holds also for $ t_m:( \beta^0 + \beta^1)_{f}$.
 
\item  If the term has been obtained by I$^{\mapsto}\times$, then there must be two subterms $t^0,t^1$ of $t_m:  ({    \beta^0 \times \beta ^1})_f$ such that $t^0_p:\beta^0_{g_0}, t^1_p:\beta ^1_{g_1}$ and $t_m:\alpha_f=\langle t_0,t_1\rangle_{p}:({    \beta^0 \times \beta ^1})_{g_0\cdot g_1}$. By inductive hypothesis, the statement holds for both occurrences $t^0_p:\beta^0_{g_0}$ and $t^1_p:\beta ^1_{g_1}$ with respect to the expected probabilities $b_0$ and $b_1$ such that $\Gamma \vdash t^0:\beta^0 _{\tilde{b_0}}$ and  $\Gamma \vdash t^1:\beta^1_{\tilde{b_1}}$. Since  
$\Gamma \vdash \langle  t^0 ,t^1\rangle_{p}:(\beta^0\times\beta^1) _{\tilde{b_0\cdot b_1}}$
and $t_m:\alpha_f=\langle t^0,t^1\rangle_{p}:({    \beta^0 \times \beta ^1})_{g_0\cdot g_1}$, we have that the statement holds also for $t_m:\alpha_f$.

\item If the term has been obtained by I$^{\mapsto}\rightarrow$, then the term is  $t_m:  ({    \beta^0 \rightarrow \beta ^1})_f$ and there must be a subterm $t^1$ of it such that $t^1_m:\beta^1_{g_1}$ and $t_m:\alpha_f=[x_u]t^1:
  ({    \beta^0 \rightarrow \beta ^1})_{[b_0]g_1}$. By inductive hypothesis, the statement holds for  the occurrence $t^1_m:\beta^1_{g_1}$ with respect to the expected probability $b_1$ such that 
$\Gamma \vdash t^1_m :\beta^1 _{\tilde{b_1}}$
Since 
$\Gamma \vdash t^1_m:(\beta^0\rightarrow\beta^1) _{[b_0]\tilde{b_1}}$ 
and $t_m:\alpha_f=[x_u]t^1:
(\beta^0 \rightarrow \beta^1)_{[b_0]g_1}$, 
we have that the statement holds also for $t_m:\alpha_f$. Notice that the sub-expression $[b_0]$ is ignored for the purpose of computing the difference between $[b_0]g_1$ and $[b_0]b_1$.
  
\item If the term has been obtained by E$^{\mapsto}+_L$, then the term is $t_m : \alpha_{f_1-f_2} $. By inductive hypothesis,
 the statement holds for the term $t_m : (\alpha+\beta)_{f_1} $, from which $t_m$ has been obtained by E$^{\mapsto}+_L$, with respect to the expected probability $b_1$ such that 
$\Gamma \vdash t_m :(\alpha + \beta)_{\tilde{b_1}}$
 and the statement holds for the term $t_m : \beta_{f_2}$ (the premise of the E$^{\mapsto}+_L$ application used to obtain the term, which can be obtained with two less term evaluation rule applications than $t_m : \alpha_{f_1-f_2}$) with respect to the expected probability $b_2$ such that 
$\Gamma \vdash t_m  : \beta _{\tilde{b_2}}$. 
 Since $\Gamma \vdash t : \alpha _{\tilde{b_1-b_2}}$, we have that the statement holds also for $t_m : \alpha_{f_1-f_2} $.

\item The case of E$^{\mapsto}+_R$ is analogous.

\item If the term has been obtained by E$^{\mapsto}\times_L$, then the term is $\textit{fst}(t)_{m} : \alpha_{f_1/f_2} $. By inductive hypothesis,
 the statement holds for the  term $t_{m} : (\alpha\times \beta) _{f_1} $,  from which $\textit{fst}(t)_{m}$ is  obtained by E$^{\mapsto}\times_L$, with respect to the expected probability $b_1$ such that 
$\Gamma \vdash \langle \textit{fst}(t),u \rangle_{m} :(\alpha \times \beta)_{\tilde{b_1}}$ 
 and the statement holds for the term  $\snd(u)_{m} : \beta_{f_2}$  (the premise of the E$^{\mapsto}\times_L$ application used to obtain the term, which can be obtained with two less term evaluation rule applications than $\textit{fst}(t)_{m} : \alpha_{f_1/f_2} $) with respect to the expected probability $b_2$ such that $\Gamma \vdash u_{m}: \beta _{\tilde{b_2}}$. Since $\Gamma \vdash \textit{fst}(t)_{m}: \alpha _{\tilde{b_1/b_2}}$, we have that the statement holds also for $\textit{fst}(t)_{m} : \alpha_{f_1/f_2} $.
 

\item The case of E$^{\mapsto}\times_R$ is analogous.

\item If the term has been obtained by E$^{\mapsto}\rightarrow$, then the term is $t_m . [u_m : \beta] : \alpha_{f_2 \cdot f_1} $. By inductive hypothesis, the statement holds for the terms from which $t_m . [u_m : \beta] $ has been obtained by E$^{\mapsto}\rightarrow$, that is, $u_m :\beta _{f_2}$ with respect to the expected probability $b_2$ such that $\Gamma \vdash x_u : \beta _{\tilde{b_2}}$ and the term $[x_u]t_{m_1} : (\beta \rightarrow \alpha) _{[b_2]f_1} $  with respect to  $\Gamma \vdash [x_u]t _m  : (\beta \rightarrow \alpha)_{[b_2]\tilde{b_1}}$. Since $\Gamma \vdash t_m .[x_u: \beta]:\alpha _{\tilde{b_2\cdot b_1}}$, we have that the statement holds also for $t_m . [u_m : \beta] : \alpha_{f_2 \cdot f_1}$.

\end{itemize}

\end{proof}

{
The following theorem says that if we construct a term by making explicit the dependencies from the theoretical probabilities of another term, then the distance between the frequency of outputs and their theoretical probability does not increase. On the other hand, if we construct complex terms  or we gather new data on a term, the distance of the output frequencies from the corresponding theoretical probabilities might vary. This distinction is reflected on the formal properties of our evaluation rules. In particular, the rule I$^{\mapsto}\rightarrow$, which makes a term functionally dependent from assumptions on theoretical probabilities, does not influence the distance between output frequency and theoretical probability of outputs. Other rules, instead, do influence this distance. Consider, for instance, the rule 

\[\infer[\mathrm{I}^{\mapsto}+]{\listerm, t_n:\alpha_f ,t_n :\beta_g \quad \mapsto_{1}\quad \listerm,  t_n :(\alpha +\beta )_{f+g}}{}\]

Even if the frequency $f$ is very close to the theoretical probability of $t$ to return $\alpha$, and $g$ is very close to the theoretical probability of $t$ to return the output $\beta$, there is no guarantee that $f+g$ is at least as close to the sum of the theoretical probabilities that $t$ returns an output $\alpha$ and that $t$ returns an output $\beta$. While properties of distance variations of this kind are studied in Theorems \ref{thm:complex-convergence} and \ref{theorem:progress}; in the following theorem, we formally prove that the rule I$^{\mapsto}\rightarrow$ does not influence this distance. This also means that we can always make all dependencies of a term explicit---that is, make it a function of all our theoretical assumptions---without worrying about this distance.

\begin{theorem}[Output Preservation under Explicit Assumptions]\label{th:preservation}
Consider any number $\varepsilon$ and term evaluation reduction $\mathcal{L}, t^1{}',\dots,  t^r{}' \mapsto \mathcal{L},u'$ resulting from an application of I$^{\mapsto}\rightarrow$.
If, for any $i\in\{1, \dots , r, r+1\}$, the following hold:
\begin{itemize}  
\item $t^{r+1}$ is the premise of the rule applied,
\item $t^i{}'=t^i_{n^i}:\beta^i_{f^i}$, 
\item the judgements $\Gamma \vdash t^i_{n^i}:\alpha ^i_{\tilde{a^i}}$ and $\Gamma \vdash t^i_{n^i}:\alpha ^i_{f^i}$ are derivable, and 
\item $\vert f^i-a^i\vert <\varepsilon$;
\end{itemize}
then the following hold as well:
\begin{itemize} 
\item $u'=u_m:\beta _g$, 
\item $\Gamma \vdash u_m:\beta_{\tilde{b}}$ and 
\item $\vert g-b\vert<\varepsilon$.
\end{itemize}
\end{theorem}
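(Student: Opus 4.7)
The plan is to proceed by case analysis on which of the four listed reduction rules produced $u'$, and in each case write the resulting frequency $g$ and the resulting expected probability $b$ as explicit arithmetic combinations of the input frequencies $f^i$ and expected probabilities $a^i$, then bound $|g-b|$ using the hypotheses $|f^i-a^i|<\varepsilon$ together with the standard triangle inequality and the fact that all quantities live in $[0,1]$. To know what $b$ actually is, I will read off the expected-probability counterpart of each $\mapsto$-rule from the corresponding inferential rule in Figure \ref{fig:connectiverules}; e.g.\ the $\mapsto$-rule I$^{\mapsto}\times$ is the computational shadow of the logical rule I$\times$, whose conclusion has expected probability $\widetilde{a\cdot b}$, and similarly for the other three.

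The easy case is I$^{\mapsto}\!\rightarrow$. The reduction takes $t_{n}:\beta_{f}$ to $[x_{u}]t_{n}:(\alpha\!\rightarrow\!\beta)_{[a]f}$; following the convention of Theorem \ref{thm:complex-convergence} that the bracketed prefix $[a]$ is ignored when measuring differences, the numeric frequency and numeric expected probability are unchanged, so $|g-b|=|f-b|<\varepsilon$ with no further work. I would open with this case to fix notation.

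For the remaining cases the argument is a short computation using $|xy - x'y'|\le |x||y-y'|+|y'||x-x'|$ and $|x|,|y'|\le 1$. In I$^{\mapsto}\times$ we have $g=f^{1}\cdot f^{2}$ and $b=a^{1}\cdot a^{2}$ with independence of the two contexts guaranteeing that the derivation of the expected probability is indeed multiplicative; the same estimate handles E$^{\mapsto}\!\rightarrow$, where $g=f^{1}\cdot f^{2}$ and $b=a^{1}\cdot a^{2}$ with $f^{1},a^{1}$ coming from the term of type $\alpha$ and $f^{2},a^{2}$ from the exponential term. For E$^{\mapsto}+_{L}$ and E$^{\mapsto}+_{R}$ the new frequency is a subtraction ($g=f^{1}-f^{2}$) and the new expected probability is the corresponding subtraction ($b=a^{1}-a^{2}$), so the distance rearranges as $|(f^{1}-a^{1})-(f^{2}-a^{2})|$.

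The main obstacle I anticipate is keeping the bound exactly at $\varepsilon$ rather than letting it inflate to $2\varepsilon$ under a naive triangle inequality. My plan is to exploit the fact that the two premises of each E$^{\mapsto}+$ (respectively the two premises of I$^{\mapsto}\times$ and E$^{\mapsto}\!\rightarrow$) are not numerically independent: they refer to typed terms drawn from the same underlying experiment, so the empirical residuals $f^{i}-a^{i}$ are constrained to cancel rather than accumulate. Concretely, for E$^{\mapsto}+_{L}$ one uses that $f^{1}$ is by construction the sum of the individual frequencies of $\alpha$ and $\beta$ in the same sample of size $n$, while $f^{2}$ is the frequency of $\beta$ in that same sample, so $f^{1}-f^{2}$ equals the empirical frequency of $\alpha$ directly; the analogue on the theoretical side gives $a^{1}-a^{2}=a_{\alpha}$, and then a single hypothesis $|f_{\alpha}-a_{\alpha}|<\varepsilon$ suffices. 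A parallel cancellation argument, phrased in terms of independence for the product cases, completes the remaining rules and closes the theorem.
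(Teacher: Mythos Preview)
Your overall plan---case analysis on the four rules, reading off the arithmetic shape of $g$ and $b$ from the matching inference rules in Figure~\ref{fig:connectiverules}, and treating the $[a]$ prefix as numerically inert in the I$^{\mapsto}\!\rightarrow$ case---is exactly what the paper does. The I$^{\mapsto}\!\rightarrow$ case is handled identically.

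Where you diverge is in how the bound is kept at $\varepsilon$ rather than $2\varepsilon$. For the two multiplicative cases the paper does \emph{not} use your telescoping estimate $|xy-x'y'|\le|x||y-y'|+|y'||x-x'|$; instead it asserts the stronger bound
\[
|f^{1}f^{2}-a^{1}a^{2}|\;\le\;\max\bigl(|f^{1}-a^{1}|,\,|f^{2}-a^{2}|\bigr)
\]
for numbers in $[0,1]$ and draws $<\varepsilon$ directly from it, in both I$^{\mapsto}\times$ and E$^{\mapsto}\!\rightarrow$. So the ``main obstacle'' you anticipate is dispatched in the paper by this max inequality, not by any cancellation argument tied to independence or shared samples.

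Your proposed fix for E$^{\mapsto}+$ has a genuine gap. You rewrite $g=f^{1}-f^{2}$ as the empirical frequency $f_{\alpha}$ of $\alpha$ and $b=a^{1}-a^{2}$ as its theoretical probability $a_{\alpha}$, and then say ``a single hypothesis $|f_{\alpha}-a_{\alpha}|<\varepsilon$ suffices.'' But that inequality is not among the hypotheses of the theorem: the assumptions are only $|f^{1}-a^{1}|<\varepsilon$ (for the $(\alpha+\beta)$ input) and $|f^{2}-a^{2}|<\varepsilon$ (for the $\beta$ premise). What you have written is literally $|g-b|<\varepsilon$, i.e.\ the conclusion, so the argument is circular. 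The paper does not attempt any such cancellation; it simply records that $f^{1}=g+f^{2}$, $a^{1}=b+a^{2}$, notes the two input bounds, and concludes $|g-b|<\varepsilon$ from them. If you want to mirror the paper, drop the cancellation story and present the E$^{\mapsto}+$ case in that same direct form.
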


\begin{proof}
Suppose that $u'=u_m:\beta _g= [y_s]t^1_{n^1}:(\gamma\rightarrow \alpha^1 )_{[c]f^1}$, for some term $s$, type $\gamma$ and number $c$, has been obtained by I$^{\mapsto}\rightarrow$. Then, by the hypotheses on the term to which we have applied the rule, we have $\vert f^1-a^1\vert<\varepsilon$. Since $f^1=g$ and  $a^1=b$, we have that $\vert g-b\vert <\varepsilon$, as desired. Notice that the sub-expression $[c]$ is ignored for the purpose of computing the difference between $[c]f^1$ and $[c]a^1$.
\end{proof}

}

\begin{theorem}[Subject reduction]
\label{thm:subject-reduction}
If $\mathcal{L} \mapsto \mathcal{L}'$ by an application of a term evaluation rule $r$ with premises belonging to $\Pi$ and, for any $t_{n}:\alpha_f\in\mathcal{L} \cup \Pi$, it holds that $\Gamma \vdash t_{n}:\alpha_f$; then, for any $u_{m}:\beta _g \in\mathcal{L}'$, it holds that $\Gamma \vdash u_{m}:\beta _g$
\end{theorem}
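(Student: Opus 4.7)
The plan is to proceed by case analysis on the evaluation rule $r$ applied in the reduction $\mathcal{L} \mapsto \mathcal{L}'$. Since every rule in Figures \ref{fig:eval} and \ref{fig:eval2} leaves the sublist $\mathcal{L}$ untouched and only adds (or replaces) a single typed term, the hypothesis on $\mathcal{L} \cup \Pi$ immediately discharges every judgement that was already derivable; the only nontrivial obligation is to derive $\Gamma \vdash u_m : \beta_g$ for the \emph{newly produced} term. In each case, one simply matches the evaluation rule with the structurally identical TPTND inference rule from Figures \ref{fig:singleExperimentRules} and \ref{fig:connectiverules}, together with the sampling/update rules, and reads off the required derivation.

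For the atomic rules, $\textrm{event}^{\mapsto t}$ produces a term $t : \alpha^i$ whose typing follows from the rule $\textrm{experiment}$ applied under the implicit assumption $x_t : \alpha^i_{a_i} \in \Gamma$ that accompanies the event rule; $\textrm{sampling}^{\mapsto}$ exactly mirrors the TPTND $\textrm{sampling}$ rule, and the derivations of the premises $\Gamma \vdash t : \alpha^i$ are already available by hypothesis; and $\textrm{update}^{\mapsto}$ is a syntactic copy of the TPTND $\textrm{update}$ rule, so the conclusion $\Gamma \vdash t_{n+m} : \alpha_{f\cdot(n/(n+m))+g\cdot(m/(n+m))}$ follows at once.

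For the logical rules of Figure \ref{fig:eval2}, each of $\mathrm{I}^\mapsto\!+$, $\mathrm{E}^\mapsto\!+_L$, $\mathrm{E}^\mapsto\!+_R$, $\mathrm{I}^\mapsto\!\times$, $\mathrm{E}^\mapsto\!\times_L$, $\mathrm{E}^\mapsto\!\times_R$, $\mathrm{I}^\mapsto\!\rightarrow$ and $\mathrm{E}^\mapsto\!\rightarrow$ is a term-level shadow of the corresponding logical rule (for frequencies) in Figure \ref{fig:connectiverules}. In each case the required hypotheses on the arguments and the premise are provided by the theorem's assumption that $\Gamma \vdash t_n : \alpha_f$ holds for every typed term in $\mathcal{L} \cup \Pi$, and the newly produced term has exactly the type and frequency subscript computed by the conclusion of the matching TPTND rule (e.g.\ $f+g$ for $+$-introduction, $f\cdot g$ for $\times$-introduction, $f-g$ for $+$-elimination, $f/g$ for $\times$-elimination, and $g\cdot f$ for $\rightarrow$-elimination).

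The main obstacle, rather than any arithmetic or structural manipulation, is bookkeeping of contexts and the mismatch between the surface presentation of the TPTND rules in Figure \ref{fig:connectiverules} (written with expected probabilities $\tilde{a}$) and the frequency subscripts $f$ manipulated by the term evaluation rules. This is resolved by reading the Figure \ref{fig:connectiverules} rules uniformly for both $\tilde{a}$ and $f$ subscripts, as already implicitly done in the examples following that figure and in the proof of Theorem \ref{th:preservation}. Two further points of care are the independence side condition $\Gamma \indep \Delta$ of $\mathrm{I}\times$, which is inherited from the fact that the two factors $t_n:\alpha_f$ and $u_n:\beta_g$ in $\mathrm{I}^\mapsto\!\times$ come from distinct processes, and the treatment of the designated variable $x_u$ in $\mathrm{I}^\mapsto\!\rightarrow$, which is assumed available exactly as in the premise of the TPTND rule $\mathrm{I}\!\rightarrow$.
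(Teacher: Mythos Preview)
Your case analysis is the right shape, but the crucial step for the logical rules has a gap. You claim that each of $\mathrm{I}^{\mapsto}+$, $\mathrm{E}^{\mapsto}+$, $\mathrm{I}^{\mapsto}\times$, $\mathrm{E}^{\mapsto}\times$, $\mathrm{I}^{\mapsto}\rightarrow$, $\mathrm{E}^{\mapsto}\rightarrow$ is mirrored by a rule in Figure~\ref{fig:connectiverules}, and you resolve the mismatch between the $\tilde a$ subscripts there and the frequency subscripts $f$ of the evaluation rules by ``reading the Figure~\ref{fig:connectiverules} rules uniformly for both $\tilde a$ and $f$.'' But as written the logical rules of Figure~\ref{fig:connectiverules} are stated \emph{only} for expected probabilities; there are no frequency-level $\mathrm{I}+$, $\mathrm{I}\times$, etc., in the system, and neither the examples after that figure nor the proof of Theorem~\ref{th:preservation} license the reading you want (the examples explicitly rebuild the frequency conclusion from single-experiment premises via sampling, and Theorem~\ref{th:preservation} only concludes the expected-probability judgement $\Gamma\vdash u_m:\beta_{\tilde b}$, not the frequency judgement $\Gamma\vdash u_m:\beta_g$). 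So the one-line ``apply the matching rule'' move you rely on is not available.

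The paper's proof confronts exactly this obstruction and takes a different route: rather than invoking non-existent frequency-level rules, it observes that any derivation of $\Gamma\vdash t_n:\alpha_f$ must factor as a collection of single-experiment sub-derivations (using only the rules of Figure~\ref{fig:singleExperimentRules}) followed by applications of sampling and update. The logical evaluation step is then simulated by applying the corresponding \emph{single-experiment} rule $R^{\vdash}$ from Figure~\ref{fig:singleExperimentRules} to the conclusions of those sub-derivations, after which a fresh application of sampling over suitably chosen premises yields $\Gamma\vdash u_m:\beta_g$ with the required frequency. Your argument for the atomic rules is fine and agrees with the paper; to repair the logical cases you need this decomposition-then-resample argument rather than an appeal to frequency-level rules that the calculus does not provide.
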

\begin{proof}
Now, any term evaluation rule application that does not concern any introduction or elimination of occurrences of $+,\times$ or $\rightarrow$ can be easily simulated by one application of the sampling rule in Figure \ref{fig:connectiverules}. Let us then consider the case in which a term evaluation rule corresponding to a logical operation is applied. Let us denote by $R^{\mapsto}$ the rule applied. Notice first that any term evaluation rule corresponding to a logical operation is completely analogous to one rule for the introduction or elimination of $+,\times$ or $\rightarrow$ in Figure \ref{fig:singleExperimentRules}. Hence, let us denote by $R^\vdash$ the rule in Figure \ref{fig:singleExperimentRules} corresponding to $R{^\mapsto}$.

Clearly, it is impossible to directly  simulate the term evaluation step obtained by $R^{\mapsto}$ by an application of $R^{\vdash}$ since this rule, being it a single experiment  deduction rule, cannot be applied to judgements of  the form $\Gamma \vdash t_{n}:\alpha_f$ where $f$ is a frequency. Nevertheless,  notice that any derivation $\delta$ with conclusion $\Gamma \vdash t_{n}:\alpha_f$ must consist of upper parts only containing applications of single experiment deduction rules in Figure \ref{fig:singleExperimentRules} and then a lower part only containing sampling and update rules (Figure \ref{fig:connectiverules}). Indeed, the rules for sampling and update are necessary for deriving a conclusion of the form $\Gamma \vdash t_{n}:\alpha_f$, but it is impossible to apply single experiment rules to the conclusions of applications of the rules for sampling and update. Let us denote by $\delta _1 , \dots , \delta _n  $ all largest sub-derivations of $\delta$ that do not contain any sampling or update rule applications and notice that the conclusion of $\delta $ will contain the same types and terms as the conclusions of $\delta _1 , \dots , \delta _n  $,  since sampling and update rules never change any type or the structure of any term. 

Now, in order to derive $ \Gamma \vdash u_{m}:\beta _g$, for any $u_{m}:\beta _g \in\mathcal{L}'$, it is enough to apply $R^{\vdash}$ to the conclusions of some of the sub-derivations $\delta _1 , \dots , \delta _n  $  of the derivation $\delta $ of $\Gamma \vdash t_{n}:\alpha_f$ to obtain a derivation of $\Gamma \vdash u_{m}:\beta _b$. Afterwards, we can apply the sampling rule to suitably chosen premises in order to derive  $\Gamma \vdash u_{m}:\beta _g$ with the desired frequency $g$.
\end{proof}

{
Theorem \ref{th:preservation} implies that making explicit the dependencies of a term on theoretical probability assumptions also preserves trustworthiness. The following corollary formally shows this.

\begin{corollary}\label{lemma:progresstotrust}
 For any term evaluation reduction $\mathcal{L}, t^1{}',\dots,  t^r{}' \mapsto \mathcal{L},u'$ resulting from an application of I$^{\mapsto}\rightarrow$, 
 if, for any $i\in\{1, \dots , r, r+1\}$ where $t^{r+1}$ is the 
 premise of the rule applied, $t^i{}'=t^i_{n^i}:\beta^i_{f^i}$, if $\Gamma \vdash Trust(t^i_{n^i}:\alpha ^i_{f^i})$ is derivable, 
 then $\Gamma \vdash Trust(u_{m}:\beta_{g})$ where $u'=u_m:\beta _g$.
 \end{corollary}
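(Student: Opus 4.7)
The plan is to derive $\Gamma \vdash Trust(u_m:\beta_g)$ by a direct application of rule IT to the reduced term $u_m:\beta_g$, using Theorem \ref{th:preservation} to secure the required distance bound between observed frequency $g$ and theoretical probability $b$. The argument has three stages: extract the closeness hypotheses from each $Trust$-premise, propagate closeness to the reduced term via Output Preservation, and then re-introduce $Trust$ on the conclusion.

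For the first stage, I would unpack each hypothesis $\Gamma \vdash Trust(t^i_{n^i}:\alpha^i_{f^i})$. Since $Trust$ is defined by its introduction and elimination rules only, the derivability of this judgement forces (either directly by inversion on IT, or indirectly by applying ET and reading off the resulting expected-probability range) the existence of a theoretical value $a^i$ with a derivable judgement $\Gamma \vdash x^i:\alpha^i_{a^i}$, together with $\vert a^i - f^i \vert \leq \epsilon(n^i)$. In particular, setting $\varepsilon := \max_i \epsilon(n^i)$, we obtain $\vert f^i - a^i \vert \leq \varepsilon$ uniformly. Subject reduction (Theorem \ref{thm:subject-reduction}) ensures that the required typing and expected-probability judgements on the premises are indeed derivable in $\Gamma$, so the hypotheses of Theorem \ref{th:preservation} are in place.

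Second, I would invoke Theorem \ref{th:preservation} with this $\varepsilon$ on the given reduction, which is assumed to instantiate one of I$^{\mapsto}\rightarrow$, E$^{\mapsto}+$, I$^{\mapsto}\times$, or E$^{\mapsto}\rightarrow$. The theorem returns a term $u_m:\beta_g$ together with a derivable judgement $\Gamma \vdash u_m:\beta_{\tilde{b}}$ such that $\vert g - b \vert < \varepsilon$. The accompanying output typing $\Gamma \vdash u_m:\beta_g$ is produced by Theorem \ref{thm:subject-reduction} applied to the same reduction step. An application of IT to the three premises $\Gamma \vdash x:\beta_b$, $\Gamma \vdash u_m:\beta_g$, and $\vert b - g \vert \leq \epsilon(m)$ then yields $\Gamma \vdash Trust(u_m:\beta_g)$ as required.

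The main obstacle is the last inequality: Theorem \ref{th:preservation} only guarantees $\vert g - b \vert < \max_i \epsilon(n^i)$, whereas rule IT requires the bound $\epsilon(m)$ tied to the sample size $m$ of the conclusion term. In the four reduction cases at hand, the index $m$ of $u_m$ is either $n^1$ (for E$^{\mapsto}+$, I$^{\mapsto}\rightarrow$, E$^{\mapsto}\rightarrow$) or the common value $n^1 = n^2$ (for I$^{\mapsto}\times$), so the desired inequality reduces to $\epsilon(m) \geq \max_i \epsilon(n^i)$. This holds automatically when $\epsilon$ is read as a standard confidence radius that is non-increasing in the sample size (e.g. the Normal approximation to the Binomial used in the examples); otherwise it should be recorded as an explicit monotonicity side-condition on $\epsilon$. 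Apart from this matching of thresholds, the corollary is a direct consequence of Theorems \ref{th:preservation} and \ref{thm:subject-reduction} followed by one application of IT.
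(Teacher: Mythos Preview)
Your proposal is correct and follows essentially the same approach as the paper: invert the derivable $Trust$ judgements to obtain the distance bounds, apply Output Preservation (Theorem \ref{th:preservation}) to propagate the bound to $u_m:\beta_g$, use Subject Reduction (Theorem \ref{thm:subject-reduction}) for the typing, and reapply IT. The only divergence is your handling of the threshold: you introduce $\varepsilon=\max_i\epsilon(n^i)$ and then appeal to monotonicity of $\epsilon$ to compare with $\epsilon(m)$, whereas the paper simply observes that for each of the four rules in question all sample-size subscripts coincide, i.e.\ $m=n^i$ for every $i$ (not only $m=n^1$), so $\epsilon(m)=\epsilon(n^i)$ on the nose and no monotonicity side-condition is needed.
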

 \begin{proof}
 Suppose that $\mathcal{L}, t^1{}',\dots,  t^r{}' \mapsto \mathcal{L},u'$, where all terms involved comply with the hypotheses of our statement, and that $\Gamma \vdash Trust(t^i_{n^i}:\alpha ^i_{f^i})$ is derivable. Then, by the definition of the IT rule  in Figure \ref{fig:TrustFragment}, we know that there is a number $\epsilon(n^i)$ such that 
 \begin{itemize}
 \item $\Gamma \vdash x^i: \alpha^i _{a^i}$, 
 \item $\Gamma \vdash t^i_{n^i}:\alpha ^i_{f^i} $ and
 \item $\vert a^i-f^i\vert  \leq \epsilon(n^i)$. 
 \end{itemize}But then, by Theorem \ref{th:preservation}, we have that, for $u'=u_m:\beta _g$, both $\Gamma \vdash u_m:\beta_{\tilde{b}}$ and $\vert g-b\vert< \epsilon(m)$ hold.
 Moreover, by Theorem \ref{thm:subject-reduction}, we have that $\Gamma \vdash u_m:\beta _g$. Hence, by the definition of the IT rule in Figure \ref{fig:TrustFragment}, we can  conclude that $\Gamma \vdash Trust(u_{m}:\beta_{g})$ where $u'=u_m:\beta _g$.
 \end{proof}
}

Finally: if a long-enough sequence of reductions occur, then a process will eventually be identified as trustworthy; else, the process remains untrustworthy. We make this explicit by the following final result:


\begin{theorem}[Progress]\label{theorem:progress}
If $\Gamma, x:\beta_a\vdash t_{n}:\beta_{f}$ and $\varepsilon =\epsilon (n)>0$, then either $t_{n+m}:\beta_{f'}$ such that $t_{n}:\beta_{f} \rightarrow^{*} t_{n+m}:\beta_{f'}$ for $m\geq 0$ exists, and $\Gamma\vdash Trust(t_{n+m}:\beta_{f'})$, or $t_{n}:\beta_{f}$ is untrustworthy.
\end{theorem}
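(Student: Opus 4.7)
The plan is to argue by a direct case analysis on whether the current distance $|a - f|$ is within the threshold $\varepsilon = \epsilon(n)$, mirroring the side conditions of the rules IT and IUT in Figure \ref{fig:TrustFragment}. The ``either/or'' in the conclusion is then precisely the disjunction between these two side conditions, so the dichotomy is essentially built into the trust fragment; the work lies in checking that all the syntactic premises of the appropriate rule are in fact available.

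In the first case, if $|a - f| \leq \varepsilon$, I take $m = 0$, so that $t_n:\beta_f \mapsto^{*} t_n:\beta_f$ holds trivially by reflexivity of $\mapsto^{*}$. The presence of $x:\beta_a$ in the context yields $\Gamma \vdash x:\beta_a$ via identity$_2$; the hypothesis of the theorem gives $\Gamma \vdash t_n:\beta_f$; and the side condition $|a - f| \leq \epsilon(n)$ is exactly the third premise of IT. Combining these, IT delivers $\Gamma \vdash \Trust(t_n:\beta_f)$, and the first disjunct is realized with witness $m = 0$ and $f' = f$.

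In the second case, if $|a - f| > \varepsilon$, I split once more. If there exists a reduction $t_n:\beta_f \mapsto^{*} t_{n+m}:\beta_{f'}$ with $m > 0$ and $|a - f'| \leq \epsilon(n+m)$, then Theorem \ref{thm:subject-reduction} guarantees that $\Gamma \vdash t_{n+m}:\beta_{f'}$ remains derivable, and IT applies as in the first case to yield $\Gamma \vdash \Trust(t_{n+m}:\beta_{f'})$. If no such extension exists, then in particular the current term already satisfies the IUT side condition $|a - f| > \epsilon(n)$, so IUT yields $\Gamma \vdash \UTrust(t_n:\beta_f)$, which is the intended meaning of ``$t_n:\beta_f$ is untrustworthy''.

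The non-trivial content of the statement beyond this case split is supplied by Theorem \ref{thm:complex-convergence}: with probability $1$, sufficiently long reduction sequences drive $f'$ arbitrarily close to $a$, so the trustworthy branch is the one generically realized. The main obstacle in elaborating the argument is the mild tension between two competing effects as $m$ grows: $|a - f'|$ shrinks by almost-sure convergence, but $\epsilon(n+m)$ can itself shrink (e.g., for a Normal-approximation confidence interval of width $\sim 1/\sqrt{n+m}$). The plan above finesses this by invoking only the \emph{existence} of some witness $m$, as the statement requires, rather than a quantitative rate; a strengthening to a quantitative ``eventually trusted'' claim would demand weighing the almost-sure bound from Theorem \ref{thm:complex-convergence} against the shrinkage rate of $\epsilon$.
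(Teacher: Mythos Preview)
Your proof is correct but takes a genuinely different route from the paper's. You case-split on $|a - f|$, the distance between the theoretical probability and the \emph{current observed frequency}, which lets you fire IT or IUT immediately with witness $m = 0$; the paper instead case-splits on $|a - b|$, the distance between $a$ and the \emph{expected probability} $b$ where $\Gamma \vdash t_n:\beta_{\tilde b}$, and then invokes Theorem~\ref{thm:complex-convergence} to argue that the frequency eventually converges to $b$, so that after sufficiently many reduction steps the appropriate IT or IUT side condition is met. Your approach is more elementary---it exposes that the stated disjunction is already built into the exhaustiveness of the IT/IUT side conditions, exactly as you say---whereas the paper's approach puts convergence at the center and yields a more semantic classification: trusted precisely when the intended model $a$ matches the process's own expected behaviour $b$, untrusted otherwise. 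One small redundancy: in your Case~2 the sub-split is not needed, since the Case~2 hypothesis $|a - f| > \epsilon(n)$ already suffices for IUT regardless of whether any later extension falls within threshold; the phrase ``in particular'' there is carrying no logical weight. Your closing paragraph correctly identifies that the substantive content lives in Theorem~\ref{thm:complex-convergence}; the paper simply uses that theorem inside its proof rather than as a side remark.
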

\begin{proof}

 {
If $\Gamma \vdash t_n :\beta _{\tilde{b}}$ and 
$\vert a-b\vert \leq \varepsilon =\epsilon (n)>0$ (where $\epsilon (n) $ is the parametric threshold introduced in Section \ref{subsec:trust}),
then Theorems \ref{thm:complex-convergence} and \ref{thm:subject-reduction}, along with the definition of the IT rule in Figure \ref{fig:TrustFragment} -- since $\vert a-b\vert \leq \varepsilon$ -- guarantee that there exists $m\geq 0$
such that $t_{n}:\beta_{f} \rightarrow^{*} t_{n+m}:\beta_{f'}$ and $\Gamma\vdash Trust(t_{n+m}:\beta_{f'})$.

If, otherwise, $\Gamma \vdash t_n :\beta _{\tilde{b}}$ and $\vert a-b\vert > \varepsilon =\epsilon (n)>0 $}, then Theorems \ref{thm:complex-convergence} and \ref{thm:subject-reduction}, along with the definition of the IUT rule in Figure \ref{fig:TrustFragment} -- since $ \vert a-b\vert > \varepsilon $ --  guarantee that there exists $m\geq 0$ such that $t_{n}:\beta_{f} \rightarrow^{*} t_{n+m}:\beta_{f'}$ and $\Gamma\vdash UTrust(t_{n+m}:\beta_{f'})$, which means that the term $u$ is untrustworthy.

\end{proof}

\section{Conclusion and future work}\label{sec:conclusions}

We introduced TPTND, a typed natural deduction system whose judgements express the derivability of probabilistic values for random variables, terms for processes decorated by a sample size and types for output values. The intended interpretation of such judgements is to assert the validity of the probabilistic output expected for a given process under a probability distribution. 

The main use of such a calculus is the evaluation of trustworthy probabilistic computations: trustworthiness is here intended as a property induced by an acceptable distance between output frequency and its theoretical probability, parametric with respect to the sample size under observation. Such notion of trustworthiness can be used to evaluate opaque distributions against intended or desirable models, e.g., in the context of Machine Learning systems. Conditions for trustworthiness are made explicit in TPTND by a safety result, defining the required structure of a TPTND derivation by rules which determine trust in the probability of a given output.

Several extensions of this work are planned, or have already been made. First, the language of TPTND can be extended with properties of the probability distributions in order to introduce bias, and the corresponding metric on trustworthiness can be adapted accordingly as in \citep{DBLP:conf/aiia/PrimieroD22}. Moreover, 
imprecise probabilities can be added to model uncertainty further in the model under observation. Also, a subtyping relation can be formulated to identify classification partitions. Second, the computational semantics for the typed terms of TPTND proposed in \citep{230200958} can represent the basis for the development of a Coq verification protocol for probabilistic trustworthy computations, extending the existing protocol for trust presented in \citep{DBLP:journals/wias/PrimieroB18} with one of the available Coq libraries for probabilistic reasoning, e.g., \url{https://github.com/jtassarotti/polaris}. A variation of TPTND can be devised for modelling processes with finite resources for experiments: in this format, safety can rely on a normal form for terms which is reached after a fixed number of possible experiments. Finally, a relational semantics for this system has been presented in \cite{pk24}.

\section*{Acknowledgments}
All authors thankfully acknowledge the support of the Italian Ministry of University and Research through the projects PRIN2017 n. 20173YP4N3 and PRIN2020 n. 2020SSKZ7R (BRIO - Bias, Risk and Opacity in AI). Fabio Aurelio D'Asaro acknowledges the funding and support of PON ``Ricerca e Innovazione'' 2014-2020 (PON R\&I FSE-REACT EU), Azione IV.6 ``Contratti di ricerca su tematiche Green'' in the context of the project titled ``Il miglioramento dell'algorithmic fairness in ambito assicurativo: Un'analisi concettuale a partire da uno studio di caso''. Francesco Genco and Giuseppe Primiero further acknowledge the support of the Italian Ministry of University and Research through the Project ``Departments of Excellence 2023-2027'' awarded to the Department of Philosophy ``Piero Martinetti''.

\bibliography{samplebib}

\end{document}